\newtheorem{thm}{Theorem}
\newtheorem{prop}{Proposition}
\newtheorem{cor}[prop]{Corollary}
\newtheorem{lem}[prop]{Lemma}
\newtheorem{example}{Example}
\newcommand{\expphi}{\left(e^{2\pi i \phi} \right)}
\newcommand{\mmse}{\mathrm{mmse}}
\tikzstyle{int}=[draw, fill=blue!10, minimum height = 1cm, minimum width=1.5cm,thick ]
\tikzstyle{small_int}=[draw, fill=blue!10, minimum height = 0.5cm, minimum width=0.7cm,thick ]
\title{\LARGE \bf
The Distortion Rate Function of \\ Cyclostationary Gaussian Processes
}
\author{Alon Kipnis$^{*}$, Andrea J. Goldsmith$^{*}$
 and Yonina C. Eldar$^{\dagger}$ 
\thanks{$^{*}$Department of Electrical Engineering, Stanford University, CA}
\thanks{$^{\dagger}$Department of EE, Technion, Israel Institute of Technology, Haifa, Israel}
\thanks{This paper was presented in part at the 2014 international symposium on information theory, Honolulu, HI, USA.}
}
\begin{document}
\graphicspath{{../Figures/}}

\maketitle

\thispagestyle{plain}
\pagestyle{plain}

\begin{abstract}
A general expression for the distortion rate function (DRF) of cyclostationary Gaussian processes in terms of their spectral properties is derived. This expression can be seen as the result of orthogonalization over the different components in the polyphase decomposition of the process. We use this expression to derive, in a closed form, the DRF of several cyclostationary processes arising in practice. We first consider the DRF of a combined sampling and source coding problem. It is known that the optimal coding strategy for this problem involves source coding applied to a signal with the same structure as one resulting from pulse amplitude modulation (PAM). Since a PAM-modulated signal is cyclostationary, our DRF expression can be used to solve for the minimal distortion in the combined sampling and source coding problem. We also analyze in more detail the DRF of a source with the same structure as a PAM-modulated signal, and show that it is obtained by reverse waterfilling over an expression that depends on the energy of the pulse and the baseband process modulated to obtain the PAM signal. This result is then used to study the information content of a PAM-modulated signal as a function of its symbol time relative to the bandwidth of the underlying baseband process. In addition, we also study the DRF of sources with an amplitude-modulation structure, and show that the DRF of a narrow-band Gaussian stationary process modulated by either a deterministic or a random phase sine-wave equals the DRF of the baseband process. 
\end{abstract}


\section{INTRODUCTION}
\label{sec:Intro}
The distortion rate function (DRF) describes the average minimal distortion achievable in sending an information source over a rate-limited noiseless link. Sources with memory posses an inherent statistical dependency that can be exploited in the context of data compression. However, not many closed-form expressions for the DRF of such sources are known, and those are usually limited to the class of stationary processes. Two notable exceptions are the DRFs of the Wiener process, derived by Berger \cite{berger1970information}, and of auto-regressive Gaussian processes, derived by Gray \cite{1054470}. Indeed, information sources are rarely stationary in practice, and source coding techniques that are based on stationary assumptions about the source will likely achieve poor performance if the source has time-varying statistics. \par
\emph{Cyclostationary processes} (CS) (also known as \emph{periodically correlated} processes or \emph{block-stationary} processes) are a class of non-stationary processes whose statistics are invariant to time shifts by integer multiples of a given time constant, denoted as the \emph{period} of the process. As described in the survey by Gardner \cite{Gardner:2006}, CS processes have been used in many fields to model periodic time-variant phenomena. In particular, they arise naturally in synchronous communication where block coding and modulation by periodic signals are used. Spectral properties of CS processes, which will be used in our derivations, are also reviewed in \cite{Gardner:2006} and in the references therein. \\

In this paper we analyze the DRF of CS Gaussian processes. Our study of the DRF of CS processes can be motivated by a very simple example: we are interested to find the DRF of the process obtained by modulating a continuous-time Gaussian stationary process $U(\cdot)$ by a cosine wave with random phase $\Phi$, namely
\begin{equation} \label{eq:intro_random_phase}
X_{\Phi}(t) = \sqrt{2}U(t)\cos\left(2\pi f_0t + \Phi \right),\quad t\in \mathbb R,
\end{equation}
where $\Phi$ is uniformly distributed over $[0,2\pi)$. This process is commonly given as an example of a wide-sense stationary process in signal processing textbooks (e.g. \cite[Ex. 8.18]{2009introduction}). Note that due to the random phase, $X_\Phi(\cdot)$ is not Gaussian and in fact is non-ergodic. It seems that in the context of rate-distortion theory, the spectrum of $X_\Phi(\cdot)$ can only be used to derive an upper bound on its DRF given by the DRF of a Gaussian stationary process with the same second order statistics \cite[Thm. 4.6.5]{berger1971rate}. The theory of asymptotic mean stationary (AMS) processes \cite{gray2009probability} implies that the DRF of $X_{\Phi}(t)$ is given by the DRF of each one of its ergodic components \cite[Thm. 11.3.1]{gray1990entropy}, corresponding to different values of the phase $\varphi \in [0,2\pi)$. Each ergodic component satisfies a source coding theorem which allows us to evaluate its DRF using an optimization over probability distributions subject to a mutual information rate constraint. One might think this decomposition might provide us with a recipe to evaluate the DRF of $X_\Phi(\cdot)$ by averaging over the DRF of the process $X_\varphi(\cdot)$, obtained by fixing the phase in $X_{\Phi}(\cdot)$. However, while the process $X_\varphi(\cdot)$ is Gaussian, it is no longer stationary -- but rather CS. While an expression for the DRF of CS processes is known \cite{Berger1968254,gray1990entropy}, it seems that the only existing mechanism for evaluating this DRF is by the Karhunen-Lo\`{e}ve (KL) expansion \cite{gallager1968information}. In this method it is required to solve for the eigenvalues of a Fredholm integral equation for each finite blocklength, and use a waterfilling expression over these eigenvalues. The DRF is finally obtained in the limit as the size of the blocklength goes to infinity. This evaluation, however, does not exploit the special block periodicity of the CS source. Moreover, it does not provide any intuition on the optimal source coding technique in terms of spectral properties of the source. In contrast, the DRF for a stationary Gaussian process is obtained by waterfilling over its power spectral density, which provides clear intuition about how the source code represents each frequency component of the signal \cite{Berger1998}. \par
In this work we derive an expression for the DRF of Gaussian CS processes which uses their spectral properties, and therefore generalizes the waterfilling expression for the DRF of Gaussian stationary processes derived by Pinsker \cite{1056823}. This expression is obtained by considering the polyphase components of the process, which can be seen as a set of stationary processes that comprise the CS process \cite{vetterli2014foundations}. We show that the DRF of a discrete-time CS can be obtained in a closed form by orthogonalizing over these components at each frequency band. For continuous-time CS processes, we obtain an expression which is based on increasingly fine discrete-time approximations of the continuous-time signal. The DRF evaluated for these approximations converges to the DRF of the continuous-time process under mild conditions on its covariance function. \par

The main results of this paper are divided into two parts. In the first part we derive a general expression for evaluating the DRF of a second order Gaussian CS process in terms of its spectral properties. This expression is given in the form of a reverse waterfilling solution over the eigenvalues of a spectral density matrix defined in terms of the \emph{time-varying spectral density} of the source. For discrete-time Gaussian processes, the size of this matrix equals the discrete period of the source. We extend our result to Gaussian CS processes in continuous-time by taking increasingly finer discrete time approximations. The resulting expression is a function of the eigenvalues of an infinite matrix. In addition, we derive a lower bound on the DRF which can be obtained without evaluation of the matrix eigenvalues. We show that this bound is tight when the polyphase components of the process are highly correlated. \par
In the second part of the paper we use our general DRF expression to study the distortion-rate performance in more specific cases. Specifically:
\begin{itemize}
\item We derive a closed form expression for the DRF of a process with a pulse-amplitude modulation (PAM) signal structure. We show how this expression can be used to derive the minimal distortion in estimating a stationary Gaussian process from a rate-limited version of its sub-Nyquist samples.
\item We study the effect of the symbol time in PAM on the information content of the modulated signal at the output of the modulator. 
\item 
We evaluate in a closed form the DRF of a Gaussian stationary narrowband process modulated by a deterministic cosine wave. We show that the DRF of the modulated process equals that of the baseband stationary Gaussian process provided the latter is narrowband. We further conclude that the stationary, non-Gaussian and non-egodic process given by \eqref{eq:intro_random_phase} above has DRF identical to the DRF of the modulated process without the random phase. These two results imply that the DRF of the stationary non-Gaussian amplitude modulated process is strictly smaller than the DRF of a Gaussian stationary process with the same second order statistics.
\end{itemize}


The rest of this paper is organized as follows: in Section~\ref{sec:background} we review concepts and notation from the theory of CS processes and rate distortion theory. Our main results are given in Section~\ref{sec:main_result}, where we derive an expression for the DRF of a Gaussian CS process. In Section \ref{sec:bound} we derive a lower bound on this DRF. In Section~\ref{sec:applications} we explore applications of out main result in various special cases. Concluding remarks are provided in Section~\ref{sec:conclusions}.

\section{Definitions and Notations}
\label{sec:background}
\subsection{Cyclostationary Processes}
Throughout the paper, we consider zero mean Gaussian processes in both discrete and continuous time. We use round brackets to denote a continuous time index and square brackets for a discrete time index, i.e. 
\[
X(\cdot)=\left\{X(t),\,t\in \mathbb R \right\},
\]
and
\[
X[\cdot]=\left\{X[n],\,n\in \mathbb Z \right\}.
\]
Matrices and vectors are denoted by bold letters. \\

The statistics of a zero mean Gaussian process $X(\cdot)$ is specified in terms of its autocorrelation function\footnote{ In \cite{gardner1994cyclostationarity} and in other references, the symmetric auto-correlation function 
\[
\tilde{R}_X(t,\tau) \triangleq \mathbb E\left[ X(t+\tau/2) X(t-\tau/2)\right] = R_X(t-\tau/2,\tau),
\]
the corresponding CPSD $\hat{\tilde{S}}^n_X(f)$ and TPSD $\tilde{S}^{~t}_X(f)$, are used. The conversion between $\hat{S}^n(f)$ and the symmetric CPSD
is given by $\hat{\tilde{S}}^n_X(f) = \hat{S}_X^n(f-n/(2T_0))$.
}
\[
R_X(t,\tau)\triangleq \mathbb E\left[X(t+\tau)X(t) \right].
\]
If in addition the autocorrelation function is periodic in $t$ with a fundamental period $T_0$,
\[
R_X(t+T_0,\tau)= R_X(t,\tau),
\]
then we say that $X(\cdot)$ is a \emph{cyclostationary process} or simply \emph{cyclostationry}\footnote{It is customary to distinguish between wide-sense cyclostationarity which relates only to the second order statistics of the process, and strict-sense cyclostationarity which relates to the finite order statistics of the process \cite[Ch. 10.4]{papoulis2002probability}. Both definitions coincide in the Gaussian case.} \cite{Gardner:2006, gardner1994cyclostationarity}. We also assume that $R_X(t,\tau)$ is bounded and Riemann integrable on $[0,T_0] \times \mathbb R$, and therefore
\[
\sigma_X^2 = \lim_{T\rightarrow \infty}  \frac{1}{2T} \int_{-T}^T \mathbb E X(t)^2 dt = \frac{1}{T_0} \int_0^{T_0} R_X(t,0) dt
\]
is finite. 
\par
Suppose that $R_X(t,\tau)$ has a convergent Fourier series representation in $t$ for almost any $\tau \in \mathbb R$. Then the statistics of $X(\cdot)$ is uniquely determined by the \emph{cyclic autocorrelation} (CA) function:
\begin{equation} \label{eq:CA_def}
\hat{R}_{X}^n(\tau) \triangleq \frac{1}{T_0} \int_{-T_0/2}^{T_0/2} R_X(t,\tau)e^{-2\pi i n t/T_0}dt,\quad n\in \mathbb Z.
\end{equation}
The Fourier transform of $\hat{R}_X^n(\tau)$ with respect to $\tau$ is denoted as the \emph{cyclic power spectral density} (CPSD) function:
\begin{equation} \label{eq:CPSD_def}
\hat{S}_X^n (f) =\int_{-\infty}^\infty \hat{R}_X^n(\tau) e^{-2\pi i \tau f}d\tau,\quad -\infty\leq f \leq \infty.
\end{equation}

If $\hat{S}_X^n(f)$ is identically zero for all $n \neq 0$, then $R_X(t,\tau)=R_X(0,\tau)$ for all $0\leq t \leq T$ and the process $X(\cdot)$ is stationary. In such a case $S_X(f)\triangleq \hat{S}_X^0(f)$ is the \emph{power spectral density} (PSD) function of $X(\cdot)$. The \emph{time-varying power spectral density} (TPSD) function \cite[Sec. 3.3]{gardner1994cyclostationarity} of $X(\cdot)$ is defined by the Fourier transform of $R_X(t,\tau)$ with respect to $\tau$, i.e.
\begin{equation} \label{eq:TPSD_def}
S^{~t}_X(f)\triangleq \int_{-\infty}^\infty R_X(t,\tau)e^{-2\pi i f\tau }d\tau.
\end{equation}
The Fourier series representation implies that
\begin{equation} \label{eq:CPSD_series_relation}
S_X^{~t}(f)=\sum_{n\in \mathbb Z} \hat{S}_X^n(f)e^{2\pi i n t/T_0}.
\end{equation}

Associated with every cyclostationary process $X(\cdot)$ with period $T_0$ is a set of stationary discrete time processes $X^t[\cdot]$, $0\leq t \leq T_0$, defined by 
\begin{equation} \label{eq:PC_def}
X^t[n]=X\left(T_0n+t\right),\quad n\in \mathbb Z.
\end{equation}
These processes are called the \emph{polyphase components} (PC) of the cyclostationary process $X(\cdot)$.
The cross-correlation function of $X^{t_1}[\cdot]$ and $X^{t_2}[\cdot]$ is given by  
\begin{align}
R_{X^{t_1}X^{t_2}}[n,k] & =\mathbb E \left[ X[T_0(n+k)+t_1]X[T_0n+t_2] \right] \nonumber \\
& = R_X\left(T_0n+t_2,T_0k+t_1-t_2 \right) \nonumber \\
& = R_X\left(t_2,T_0k+t_1-t_2\right).
\end{align}
Since $R_{X^{t_1}X^{t_2}}[n,k]$ depends only on $k$, this implies that $X^{t_1}[\cdot]$ and $X^{t_2}[\cdot]$ are jointly stationary. The PSD of $X^t[\cdot]$ is given by 
\begin{align}
S_{X^t}\left(e^{2\pi i \phi} \right) \triangleq &\sum_{k\in \mathbb Z}R_{X^t X^t}[0,k] e^{-2\pi i \phi k} \nonumber \\
= & \sum_{k\in \mathbb Z}R_X\left(t,T_0k \right) e^{-2\pi i \phi k},\quad -\frac{1}{2}\leq \phi \leq \frac{1}{2}.  \label{eq:stationary_PSD}
\end{align}

Using the spectral properties of sampled processes, we can use \eqref{eq:stationary_PSD} and \eqref{eq:CPSD_series_relation} to connect the functions $S_{X^t}\left(e^{2\pi i \phi} \right)$ and the CPSD of $X(\cdot)$ as follows:
\begin{align}
S_{X^t}\left(e^{2\pi i \phi} \right) 
= & \frac{1}{T_0}\sum_{k\in \mathbb Z} S_X^{~t}\left(\frac{\phi-k}{T_0} \right) \nonumber \\
= & \frac{1}{T_0}\sum_{k\in \mathbb Z} \sum_{n\in \mathbb Z} \hat{S}_X^n\left(\frac{\phi-k}{T_0}  \right)e^{2\pi i n t/T_0}. \nonumber 
\end{align}
More generally, for $t_1,t_2 \in [0,T_0]$ we have
\begin{align}
S_{X^{t_1}X^{t_2}}\left(e^{2\pi i \phi} \right) 
= &\sum_{k\in\mathbb Z} R_{X^{t_1}X^{t_2}}[0,k] e^{-2\pi i k \phi} \label{eq:CPSD_PSD_cont}  \\
= & \frac{1}{T_0}\sum_{k\in \mathbb Z} S_X^{~t_2}\left(\frac{\phi-k}{T_0}  \right) e^{2\pi i(t_1-t_2)\frac{\phi-k}{T_0}} \nonumber \\
= & \frac{1}{T_0}\sum_{k\in \mathbb Z} \sum_{m\in \mathbb Z} \hat{S}_X^m\left(\frac{\phi-k}{T_0}  \right)e^{2\pi i\left( m \frac{t_2}{T_0}+ \frac{t_1-t_2}{T_0}(\phi-k)\right)} \nonumber. 
\end{align}

We now turn to briefly describe the discrete-time counterpart of the CA, CPSD, TPSD and the polyphase components defined in \eqref{eq:CA_def}, \eqref{eq:CPSD_def}, \eqref{eq:TPSD_def} and \eqref{eq:PC_def}, respectively. \par
A discrete time zero mean Gaussian process ${X}[\cdot]$ is said to be CS with period $M\in \mathbb N$ if its covariance function 
\[
R_{{X}}[n,k]=\mathbb E\left[ {X}[n+k] {X}[n] \right]
\]
is periodic in $k$ with period $M$. For $m=0,\ldots,M$, the $m^\textrm{th}$ cyclic autocorrelation (CA) function of ${X}[\cdot]$ is defined as
\[
\hat{R}_{{X}}^m[k] \triangleq \sum_{n=0}^{M-1}R_{{X}}[n,k]e^{-2\pi i nm/M}.
\]
The $m^\textrm{th}$ CPSD function is then given by
\begin{align*}
\hat{S}_{{X}}^m\left(e^{2\pi i\phi} \right)\triangleq & \sum_{k\in \mathbb Z} \hat{R}_{{X}}^m[k] e^{-2\pi i \phi k},
\end{align*}
and the discrete TPSD function is 
\[
S_{{X}}^n \left(e^{2\pi i \phi} \right)\triangleq  \sum_{k\in \mathbb Z} R_{{X}}[n,k]e^{-2\pi i \phi k}.
\]
Finally, we have the discrete time Fourier transform relation
\[
S_{{X}}^n\left(e^{2\pi i \phi} \right)= \frac{1}{M} \sum_{m=0}^{M-1} \hat{S}_{{X}}^m\left(e^{2\pi i\phi} \right) e^{2\pi i \phi n m/M}.
\]
\par
The $m$-th stationary component $\bar{X}^m[\cdot]$, $0\leq m \leq M-1$ of ${X}[\cdot]$ is defined by 
\begin{equation} \label{eq:stationary_component_discrete}
{X}^m[n]\triangleq {X}[Mn+m], \quad n\in \mathbb Z.
\end{equation}
For $0\leq m,r,n\leq M-1$ and $k\in \mathbb Z$ we have
\begin{align}
R_{{X}^{m}{X}^r}[n,k] & = \mathbb E \left[{X}^m[n+k] {X}^r[n]\right] \nonumber \\
& = \mathbb E \left[{X}[Mn+Mk+m] {X}[Mn+r]\right] \nonumber \\
& = R_{{X}}[Mn+r,Mk+m-r] \nonumber \\
& = R_{{X}}[r,Mk+m-r]. \label{eq:discerete_R_cross}
\end{align}
Using properties of multi-rate signal processing:
\begin{align}
 S_{{X}^{m}{X}^r}\left(e^{2\pi i \phi} \right)
= & \sum_{k\in \mathbb Z} R_{{X}}[r,Mk+m-r] e^{-2\pi i k \phi} 
\nonumber \\ 
= & \frac{1}{M}\sum_{n=0}^{M-1}S^r_{{X}}\left(e^{2\pi i \frac{\phi-n}{M}} \right)e^{2\pi i (m-r) \frac{\phi-n}{M}}. \label{eq:PSD_discrete}
\end{align}
The discrete-time counterpart of \eqref{eq:CPSD_PSD_cont} is then
\begin{equation}
\label{eq:PC_cov_alternative}
S_{{X}^m {X}^r} \left(e^{2\pi i \phi}\right)= \frac{1}{M}\sum_{k=0}^{M-1} \sum_{n=0}^{M-1} \hat{S}_{{X}}^n\left(e^{2\pi i \frac{\phi-k}{M}} \right)e^{2\pi i\frac{nr+(m-r)(\phi-k)}{M} }.
\end{equation}
The functions $S_{{X}^m {X}^r}\left(e^{2\pi i \phi} \right)$, $0\leq m,r \leq M-1$ define an $M\times M$ matrix $\mathbf S_{{X}} \expphi$ with $(m+1,r+1)^\textrm{th}$ entry $S_{{X}^m {X}^r}\left(e^{2\pi i \phi} \right)$. This matrix completely determines the statistics of ${X}[\cdot]$, and can be seen as the PSD matrix associated with the stationary vector valued process ${\mathbf X}^M[n]$ defined by the stationary components of ${X}[\cdot]$:
\begin{equation} \label{eq:associated_stationary}
{\mathbf X}^M[n]\triangleq \left({X}^0[n],\ldots,{X}^{M-1}[n] \right),\quad n\in \mathbb Z.
\end{equation}
We denote the autocorrelation matrix of ${\mathbf X}^M[\cdot]$ as the PSD-PC matrix. Note that the $(r+1,m+1)^\mathrm{th}$ entry of the PSD-PC matrix is given by \eqref{eq:discerete_R_cross}. \\


\subsection{Examples}
We present two important modulation models which result in CS processes.

\begin{example}[amplitude modulation (AM)] \label{ex:AM}
Given a Gaussian stationary process $U(\cdot)$ with PSD $S_U(f)$, consider the process
\[
X_{AM}(t) = \sqrt{2} U(t) \cos \left(2\pi f_0 t +\varphi \right),
\]
where $f_0 >0$ and $\varphi \in [0,2\pi)$ are deterministic constants. This process is CS with period $T_0 = f_0^{-1}$ and CPSD \cite[Eq. 41]{1096820}
\[
\hat{S}_{AM}^m(f)= \frac{1}{2} \begin{cases} S_U(f+f_0)+ S_U(f-f_0), & m=0, \\
S_U(f\mp f_0)e^{\pm 2i \varphi }, & m\pm 2, \\
0, & \textrm{otherwise}.
\end{cases}
\]
This leads to the TPSD
\begin{align}
\label{eq:AM_SCD}
S_{X}^{~t}(f)=\frac{1}{2} S_U(f+f_0)&(1+e^{-2 \left(2\pi i f_0 t+\varphi \right)} ) \\ & +
 \frac{1}{2} S_U(f-f_0)(1+ e^{2 \left(2\pi i f_0 t+\varphi \right)}) \nonumber.
\end{align}

\end{example}

\begin{example}[pulse-amplitude modulation (PAM)] \label{ex:PAM}
Consider a Gaussian stationary process $U(\cdot)$ modulated by a deterministic signal $p(t)$ as follows:
\begin{equation} \label{eq:pam_def}
X_{PAM}(t) = \sum_{n\in \mathbb N} U(nT_0)p(t-nT_0).
\end{equation}
This process is CS with period $T_0$ and CPSD \cite[Eq. 49]{1096820}
\begin{equation} \label{eq:PAM_CPSD}
\hat{S}_{PAM}^n(f) =  \frac{1}{T_0} P\left(f \right) P^*\left(f-\frac{n}{T_0} \right)S_U\left(f\right), \quad n\in \mathbb Z,
\end{equation}
where $P(f)$ is the Fourier transform of $p(t)$ and $P^*(f)$ is its complex conjugate. If $T_0$ is small enough such that the support of $P(f)$ is contained within the interval $\left(-\frac{1}{2T_0}, \frac{1}{2T_0} \right)$, then $\hat{S}_{PAM}^n(f)=0$ for all $n\neq 0$, which implies that $X_{PAM(\cdot)}$ is stationary. 
\end{example}

\subsection{The Distortion-Rate Function}
For a fixed $T>0$, let $X_T$ be the reduction of $X(\cdot)$ to the interval $[-T,T]$. Define the distortion between two waveforms $x(\cdot)$ and $y(\cdot)$ over the interval $[-T,T]$ by 
\begin{equation} \label{eq:dist_cont}
d_T \left( x(\cdot), y(\cdot) \right) \triangleq \frac{1}{2T} \int_{-T}^T  \left(x(t) - y(t)\right)^2dt.
\end{equation}
We expand $X_T$ by a Karhunen-Lo\`{e}ve (KL) expansion \cite[Ch 9.7]{gallager1968information} as
\begin{equation} \label{eq:KL_expansion}
X_T(t) = \sum_{k=1}^\infty X_k f_k(t), \quad -T\leq t \leq T,
\end{equation}
where $\left\{f_k\right\}$ is a set of orthogonal functions over $[-T,T]$ satisfying the Fredholm integral equation
\begin{equation}
\lambda_k f_k(t) = \frac{1}{2T} \int_{-T}^T K_X(t,s) f_k(s) ds,\quad t\in[-T,T],
\label{eq:fredholm}
\end{equation}
with corresponding eigenvalues $\left\{ \lambda_k\right\}$, and where 
\[
K_X(t,s) \triangleq \mathbb E X(t) X(s) = R_X(s,t-s). 
\]
Assuming a similar expansion as \eqref{eq:KL_expansion} to an arbitrary random waveform $Y_T$, we have
\[
\mathbb E d_T(X_T,Y_T) = \frac{1}{2T} \int_{-T}^T \mathbb E \left( X(t) - Y(t) \right)^2 dt = \sum_{n=-\infty}^\infty \mathbb E \left(X_n - Y_n \right)^2. 
\]
The mutual information between $X(\cdot)$ and $Y(\cdot)$ on the interval $[-T,T]$ is defined by
\[
 I_T \left( X(\cdot), Y(\cdot) \right) \triangleq \frac{1}{2T} \lim_{N\rightarrow \infty} I\left( \mathbf X_{-N}^N ; \mathbf Y_{-N}^N  \right),
\]
where $\mathbf X_{-N}^N = \left( X_{-N},\ldots, X_{N} \right)$, $\mathbf Y_{-N}^N = \left( Y_{-N},\ldots, Y_{N} \right)$ and the $X_n$s and $Y_n$s are the coefficients in the KL expansion of $X(\cdot)$ and $Y(\cdot)$, respectively. \par 
Denote by $\mathcal P_T$ the set of joint probability distributions $P_{X,\hat{X}}$ over the waveforms $\left(X(\cdot), \hat{X}(\cdot) \right)$, such that the marginal of $X(\cdot)$ agrees with the original distribution, and the average distortion $\mathbb E d_T \left( X(\cdot), \hat{X}(\cdot) \right)$ does not exceed $D$. The rate-distortion function (RDF) of $X(\cdot)$ is defined by
\[
R(D) =  \lim_{T\rightarrow \infty} R_T(D),
\]
where 
\[
R_T(D) =  \inf I_T\left( X(\cdot); \hat{X}(\cdot) \right), 
\]
and the infimum is over the set $\mathcal P_T$. It is well known that $R(D)$ and $R_T(D)$ are non-decreasing convex functions of $D$ \cite{berger1971rate}, and therefore continuous in $D$ over any open interval. We define their inverse function as the distortion-rate functions $D(R)$ and $D_T(R)$, respectively. We note that by its definition, $D(R)$ is bounded from above by the average power of $X(\cdot)$ over a single period:
\begin{align*}
\sigma_X^2 & \triangleq \lim_{T\rightarrow \infty} \frac{1}{2T} \int_{-T}^T \mathbb E X^2(t) dt  = \lim_{T\rightarrow \infty} \frac{1}{2T} \int_{-T}^T R_X(t,0) dt \\
& = \frac{1}{T_0} \int_0^{T_0} R_X(t,0) dt = \hat{R}^0_X(0).
\end{align*}

For Gaussian processes, we have the following parametric representation for $R_T(D)$ or $D_T(R)$ \cite[Eq. 9.7.41]{gallager1968information}
\begin{subequations}
\label{eq:DRF_KL}
\begin{align}
D_T(\theta) & = \sum_{k=1}^\infty \min \left\{\theta,\lambda_k \right\} \label{eq:DRF_KL_D} \\
R_T(\theta) & = \frac{1}{2}\sum_{k=1}^\infty \log^+\left( \lambda_k/\theta \right) \label{eq:DRF_KL_R},
\end{align}
\end{subequations}
where $\log^+ x \triangleq \max\left\{ \log x,0\right\}$. 
 \par
In the discrete-time case the DRF is defined in a similar way as in the continuous-time setting described above by replacing the continuous-time index in \eqref{eq:dist_cont}, \eqref{eq:KL_expansion} and \eqref{eq:fredholm}, and by changing integration to summation. Since the KL transform preserves norm and mutual information, this definition of the DRF in the discrete-time case is consistent with standard developments for the DRF of a discrete-time source with memory as in \cite[Ch. 4.5.2]{berger1971rate}. Note that with these definitions, the continuous-time distortion is measured in MSE per time unit while the discrete-time distortion is measured in MSE per source symbol. Similarly, in continuous-time, $R$ represents bitrate, i.e., the number of bits per time unit. In the discrete-time setting we use the notation $\bar{R}$ to denote bits per source symbol. \par

Since the distribution of a zero-mean Gaussian CS process with period $T_0$ is determined by its second moment $R_X(t,\tau)$, we observe that such processes are $T_0$-ergodic and therefore \emph{block-ergodic} as defined in \cite[Def. 1]{Berger1968254}. It follows that a source coding theorem that associates $D(R)$ with the optimal MSE performance attainable in encoding $X(\cdot)$ at rate $R$ is obtained from the main result of \cite{Berger1968254}. Specifically for the discrete-time case, it is shown in \cite[Exc. 6.3.1]{gray2009probability} that CS processes belong to the class of asymptotic mean stationary process (AMS) \cite{gray2009probability}, where a source coding theorem for AMS processes can be found in \cite{gray2011entropy}. \\

\subsection{Problem Formulation: Evaluating the DRF}
In the special case in which $X(\cdot)$ is stationary, it is possible to obtain $D(R)$ without explicitly solving the Fredholm equation \eqref{eq:fredholm} or evaluating the KL eigenvalues: in this case, the density of these eigenvalues converges to the PSD $S_X\left(f\right)$ of $X(\cdot)$. This leads to the celebrated \emph{reverse waterfilling} expression for the DRF of a stationary Gaussian process, originally derived by Pinsker \cite{1056823}: 
\begin{subequations}
\label{eq:SKP_cont}
\begin{equation} \label{eq:SKP_R_cont}
R(\theta) = \frac{1}{2} \int_{-\infty}^\infty \log^+\left[S_X\left(f \right)/\theta \right] df.
\end{equation}
\begin{equation} \label{eq:SKP_D_cont}
D(\theta) = \int_{-\infty}^\infty \min \left\{ S_X\left(f \right),\theta \right\}d\phi.
\end{equation}
\end{subequations}

The discrete-time version of \eqref{eq:SKP_cont} is given by 
\begin{subequations}
\label{eq:SKP}
\begin{equation} \label{eq:SKP_R}
\bar{R}(\theta) = \frac{1}{2} \int_{-\frac{1}{2}}^\frac{1}{2} \log^+\left[S_X\left(e^{2\pi i \phi} \right)/\theta \right] d\phi.
\end{equation}
\begin{equation} \label{eq:SKP_D}
D(\theta) = \int_{-\frac{1}{2}}^\frac{1}{2} \min \left\{ S_X\left(e^{2\pi i \phi} \right),\theta \right\}d\phi.
\end{equation}
\end{subequations}
Equations \eqref{eq:SKP_cont} and \eqref{eq:SKP} define the distortion as a function of the rate through a joint dependency on the water level parameter $\theta$. \par
We note that stationarity is not a necessary condition for the existence of a density function for the eigenvalues in the KL expansion. For example, such a density function is known for the Wiener process \cite{berger1970information} which is a non-stationary process. \\

The main problem we consider in this paper is the evaluation of $D(R)$ for a general CS Gaussian process. In principle, this evaluation can be obtained by computing the KL eigenvalues in \eqref{eq:fredholm} for each $T$, using \eqref{eq:DRF_KL} to obtain $D_T(R)$ and finally taking the limit as $T$ goes to infinity. For general CS processes, however, an easy way to describe the density of the KL eigenvalues is in general unknown. As a result, the evaluation of the DRF directly by the KL eigenvalues usually does not lead to a closed-form solution. In the next section we derive an alternative representation for the function $D(R)$ which is based on an approximation of the kernel $K_X(t,s)$ used in \eqref{eq:fredholm}.

\section{Main Results \label{sec:main_result}}
In this section we derive our main results with respect to  an expression for the DRF of a Gaussian CS which does not involve the solution of the Fredholm integral equation \eqref{eq:fredholm}. \\

Our first observation is that in the discrete-time case, the DRF of a Gaussian CS process can be obtained by an expression for the DRF of a vector Gaussian stationary source. This expression is an extension of \eqref{eq:SKP}, which was derived in \cite[Eq. (20) and (21)]{1628751} and is given as follows:
\begin{subequations}
\label{eq:vector_RD}
\begin{align}
D_{ {\mathbf X}}\left(\theta\right)&=\frac{1}{M}\sum_{m=1}^{M}\int_{-\frac{1}{2}}^{\frac{1}{2}}\min\left\{ \lambda_{m}\left(e^{2\pi i \phi} \right),\theta\right\} d\phi \label{eq:vector_RD_D} \\
R\left(\theta\right)&= \frac{1}{M} \sum_{m=1}^{M}\int_{-\frac{1}{2}}^{\frac{1}{2}}\frac{1}{2}\log^{+}\left[\lambda_m\left(e^{2\pi i \phi} \right)/\theta\right]d\phi,\label{eq:vector_RD_R}
\end{align}
\end{subequations}
where $\lambda_1\left(e^{2\pi i \phi} \right),..., \lambda_{M}\left(e^{2\pi i \phi} \right)$
are the eigenvalues of the PSD matrix $\mathbf{S}_{{\mathbf X}}\left(e^{2\pi i \phi} \right)$ at frequency $\phi$. We have the following result:
\begin{thm} \label{thm:main_result_discrete}
Let ${X}[\cdot]$ be a discrete-time Gaussian cyclostationary process with period $M\in \mathbb N$. The distortion rate function of ${X}[\cdot]$ is given by
\begin{subequations}
\label{eq:main_result_discrete}
\begin{align} 
D(\theta)& =\frac{1}{M}\sum_{m=1}^{M}\int_{-\frac{1}{2}}^{\frac{1}{2}} \min\left\{\lambda_m\left(e^{2\pi i \phi }\right),\theta \right\}d\phi  \label{eq:main_result_discrete_D} \\
\bar{R}(\theta)& =\frac{1}{2M} \sum_{m=1}^{M} \int_{-\frac{1}{2}}^{\frac{1}{2}}\log^+ \left[\lambda_m \left(e^{2\pi i \phi }\right)/\theta \right]d\phi, \label{eq:main_result_discrete_R}
\end{align}
\end{subequations}
where $\lambda_1\expphi \leq \ldots \leq \lambda_{M}\expphi $ are the eigenvalues of the PSD-PC matrix with $\left(m+1,r+1\right)^\textrm{th}$ entry given by 
\begin{equation}
\mathbf S_{{X}^m {X}^r} \expphi = \frac{1}{M}\sum_{n=0}^{M-1}S^r_{{X}}\left(e^{2\pi i \frac{\phi-n}{M}} \right)e^{2\pi i (m-r) \frac{\phi-n}{M}}. \label{eq:thm_main_entries}
\end{equation}
\end{thm}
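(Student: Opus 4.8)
The plan is to reduce the scalar cyclostationary coding problem to a vector-stationary one and then invoke the reverse-waterfilling formula \eqref{eq:vector_RD}. The basic object is the polyphase (blocking) map $X[\cdot]\mapsto\mathbf X^M[\cdot]$ of \eqref{eq:associated_stationary}, which stacks the $M$ consecutive samples $X[Mn],\ldots,X[Mn+M-1]$ into the vector $\mathbf X^M[n]$. As computed in \eqref{eq:discerete_R_cross}--\eqref{eq:PSD_discrete}, the $M$-periodicity of $R_X[n,k]$ forces the cross-correlations of the components to depend only on the lag, so $\mathbf X^M[\cdot]$ is a jointly stationary vector process (wide-sense, and hence strictly since it is Gaussian) whose PSD matrix $\mathbf S_X(e^{2\pi i\phi})$ has $(m+1,r+1)$ entry \eqref{eq:PSD_discrete}, which is precisely the expression \eqref{eq:thm_main_entries} of the theorem. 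Being a cross-spectral density matrix, $\mathbf S_X(e^{2\pi i\phi})$ is Hermitian and positive semidefinite at each $\phi$, so its eigenvalues $\lambda_1(e^{2\pi i\phi})\le\cdots\le\lambda_M(e^{2\pi i\phi})$ are real and nonnegative.

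The heart of the argument is the identity $D_X(R)=D_{\mathbf X^M}(R)$, with both functions measured per scalar source symbol. First I would observe that the blocking map is merely a regrouping of coordinates: listed in order, the entries of $(\mathbf X^M[0],\ldots,\mathbf X^M[N-1])$ reproduce the scalar segment $(X[0],\ldots,X[MN-1])$ exactly. Consequently the two segments share the same covariance matrix and hence, through \eqref{eq:DRF_KL}, the same KL eigenvalues and the same finite-block parametric pair $(D_T(\theta),R_T(\theta))$. Equivalently, at the operational level the mutual information $I(\mathbf X_0^{MN-1};\hat{\mathbf X}_0^{MN-1})$ is unaffected by the regrouping, while the scalar per-symbol distortion $\tfrac{1}{MN}\sum_{n=0}^{MN-1}\mathbb E(X[n]-\hat X[n])^2$ equals the normalized vector distortion $\tfrac{1}{MN}\sum_{n=0}^{N-1}\mathbb E\|\mathbf X^M[n]-\hat{\mathbf X}^M[n]\|^2$; thus the feasible sets $\mathcal P_T$ of the two problems are in bijection with identical objective and constraint values. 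Letting $T\to\infty$ then yields equality of the DRFs.

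It remains to evaluate $D_{\mathbf X^M}(R)$. Applying the vector reverse-waterfilling expression \eqref{eq:vector_RD} of \cite{1628751} to the stationary vector process $\mathbf X^M[\cdot]$ produces exactly \eqref{eq:main_result_discrete}, with the $\lambda_m(e^{2\pi i\phi})$ being the eigenvalues of $\mathbf S_X(e^{2\pi i\phi})$ whose entries are \eqref{eq:thm_main_entries}; this completes the derivation. The operational meaning of these information quantities --- that $D(\bar R)$ is the least MSE per symbol attainable in encoding $X[\cdot]$ at rate $\bar R$ --- is supplied by the source-coding theorem for block-ergodic (AMS) processes already invoked for CS sources.

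I expect the main obstacle to be the bookkeeping of normalizations in the reduction step. One must check that the per-scalar-symbol distortion and bitrate conventions defining $D(\theta)$ and $\bar R(\theta)$ for $X[\cdot]$ match exactly the $1/M$-normalized conventions of \eqref{eq:vector_RD}, so that no spurious factor of $M$ is introduced when translating between ``per vector symbol'' and ``per scalar symbol'' units. A secondary technical point is justifying that the finite-block identity $R_T=R_{\mathbf X^M,T}$ survives the $T\to\infty$ limit defining $R(D)$; this is immediate here, since the identity holds at every blocklength and the limits therefore agree.
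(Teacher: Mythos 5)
Your proposal is correct and follows essentially the same route as the paper's proof: block the scalar CS process into the stationary vector process $\mathbf X^M[\cdot]$ of \eqref{eq:associated_stationary}, identify its PSD matrix with the PSD-PC matrix \eqref{eq:thm_main_entries}, argue that the DRFs of $X[\cdot]$ and $\mathbf X^M[\cdot]$ coincide (the paper does this via the codebook correspondence and the AMS source-coding theorem, you additionally note the equivalent finite-block KL/mutual-information argument), and then apply the vector reverse-waterfilling formula \eqref{eq:vector_RD}. No gaps.
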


\begin{proof}
A full proof can be found in Appendix \ref{sec:main_result_discrete}. The idea is to use the polyphase decomposition \eqref{eq:PSD_discrete} and the stationary vector valued process ${\mathbf X}^M[\cdot]$ defined in \eqref{eq:associated_stationary}. The PSD matrix of the process is shown to coincide with the PSD-PC matrix of $X[\cdot]$. The proof shows that the DRF of $X[\cdot]$ coincides with the DRF of $\mathbf{X}^M[\cdot]$. The result then follows by applying \eqref{eq:vector_RD} to ${\mathbf X}^M[\cdot]$. 
\end{proof}

Equation~\eqref{eq:main_result_discrete} has the waterfilling interpretation illustrated in Fig.~\ref{fig:waterfilling_eigenvalues}: the DRF is obtained by setting a single water-level over all eigenvalues of \eqref{eq:thm_main_entries}. These eigenvalues can be seen as the PSD of $M$ independent processes obtained by the orthogonalization of the PC of $X[\cdot]$. Compared to the limit in the discrete-time version of the KL expansion \eqref{eq:DRF_KL}, expression \eqref{eq:main_result_discrete} exploits the CS structure of the process by using its spectral properties. These spectral properties capture information on the entire time-horizon and not only over a finite blocklength as in the KL expansion. \par
The following theorem explains how to extend the above evaluation to the continuous-time case.

\begin{figure}
\begin{center}
\includegraphics[scale=0.5, clip = true, trim = 2.5cm 0cm 1cm 0cm]{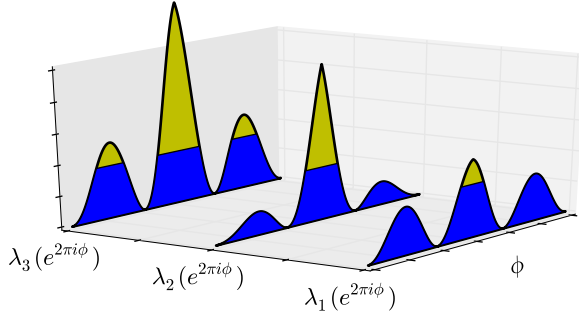}
\caption{\label{fig:waterfilling_eigenvalues} Waterfilling interpretation of \eqref{eq:main_result_discrete} for $M=3$. The blue and the yellow parts are associated with equations \eqref{eq:main_result_discrete_D} and \eqref{eq:main_result_discrete_R}, respectively.}
\end{center}
\end{figure}

\begin{thm} \label{thm:main_result}
Let $X(\cdot)$ be a Gaussian cyclostationary process with period $T_0$ and correlation function $R_X(t,\tau)$ Lipschitz continuous in its second argument. For a given $M\in \mathbb N$, denote
\begin{subequations}
\label{eq:main_result_cont}
\begin{align}
D_M(\theta_M) & =\frac{1}{M}\sum_{m=1}^{M}\int_{-\frac{1}{2}}^{\frac{1}{2}} \min\left\{\lambda_m\left(e^{2\pi i \phi }\right),\theta_M \right\}d\phi  \label{eq:main_D} \\
R(\theta_M) & =\frac{1}{2T_0} \sum_{m=1}^{M} \int_{-\frac{1}{2}}^{\frac{1}{2}}\log^+ \left[\lambda_m \left(e^{2\pi i \phi }\right)/\theta_M \right]d\phi, \label{eq:main_R}
\end{align}
\end{subequations}
where $\lambda_1\expphi \leq \ldots \leq \lambda_M \expphi $ are the eigenvalues of the matrix $\mathbf S_X \expphi $ with its $(m+1,r+1)^\textrm{th}$ entry given by 
\begin{align}
\frac{1}{T_0}\sum_{k\in \mathbb Z } & S_{X}^{rT_0/M} \left(\frac{\phi-k}{T_0} \right) e^{2\pi i (m-r) \frac{\phi-k}{M}}
\label{eq:main_result_entries} \\
& = \frac{1}{T_0}\sum_{k\in \mathbb Z } \sum_{n\in \mathbb Z}  \hat{S}^n_X \left(\frac{\phi-k}{T_0} \right) e^{2\pi i \frac{ n r + (m-r) (\phi-k) } {M}}. \nonumber
\end{align}
Then the limit of $D_M$ in $M$ exists and the distortion-rate function of $X(\cdot)$ is given by 
\begin{equation}
\label{eq:main_thm_DR}
D(R)=\lim_{M\rightarrow \infty} D_M\left(\theta_M(R)\right).
\end{equation}
\end{thm}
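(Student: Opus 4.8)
The plan is to realize the continuous-time process as a limit of discrete-time cyclostationary processes obtained by uniform sampling, apply Theorem~\ref{thm:main_result_discrete} to each, and then show that the resulting sequence of distortion-rate functions converges to $D(R)$.

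First I would fix $M\in\mathbb N$ and introduce the sampled process $X_M[k]\triangleq X(kT_0/M)$, $k\in\mathbb Z$. Since $R_X(t,\tau)$ is $T_0$-periodic in $t$, the process $X_M[\cdot]$ is a discrete-time Gaussian cyclostationary process of period $M$, and its $m$-th polyphase component \eqref{eq:stationary_component_discrete} is exactly the sampled continuous polyphase process $X^{mT_0/M}[\cdot]$ of \eqref{eq:PC_def}, since $X_M[Mn+m]=X(nT_0+mT_0/M)$. Specializing \eqref{eq:CPSD_PSD_cont} to $t_1=mT_0/M$ and $t_2=rT_0/M$ identifies the cross-PSD of these components with the $(m+1,r+1)$-th entry \eqref{eq:main_result_entries}; hence the PSD-PC matrix of $X_M[\cdot]$ is precisely $\mathbf S_X\expphi$. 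Theorem~\ref{thm:main_result_discrete} then delivers the DRF of $X_M[\cdot]$ through \eqref{eq:main_result_discrete}, with both distortion and rate measured per source symbol.

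Next I would convert these discrete-time quantities to the continuous-time normalization. Because $X_M[\cdot]$ carries $M/T_0$ samples per unit time, the rate in bits per unit time equals $(M/T_0)\bar R$, which turns \eqref{eq:main_result_discrete_R} into \eqref{eq:main_R}; and since the continuous distortion \eqref{eq:dist_cont} is, with the natural weighting of spacing $T_0/M$, the Riemann approximant of the per-symbol mean-squared error, the per-symbol distortion coincides numerically with the distortion per unit time, so \eqref{eq:main_result_discrete_D} becomes \eqref{eq:main_D}. Thus $D_M(\theta_M(R))$ is exactly the DRF of the sampled process $X_M[\cdot]$ expressed in continuous-time units.

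The remaining and most delicate step is to prove $\lim_{M\to\infty}D_M(\theta_M(R))=D(R)$, which simultaneously establishes existence of the limit. For a fixed horizon $T$, discretizing the Fredholm equation \eqref{eq:fredholm} on the $N=2TM/T_0$ sample points $t_j=jT_0/M$ is the Nystr\"om (Riemann-sum) approximation of its kernel; since $R_X$ is Lipschitz in its second argument, $K_X(t,s)=R_X(s,t-s)$ is continuous, so the eigenvalues of the normalized sample covariance matrix converge to the KL eigenvalues $\{\lambda_k^{(T)}\}$ as $M\to\infty$, and feeding this into \eqref{eq:DRF_KL} shows that the finite-horizon sampled DRF converges to the finite-horizon continuous DRF $D_T(R)$, which tends to $D(R)$ as $T\to\infty$. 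The main obstacle is that Theorem~\ref{thm:main_result_discrete} already performs the horizon limit $T\to\infty$ inside each $D_M$, so the identity amounts to interchanging the two limits $M\to\infty$ and $T\to\infty$. I would resolve this by using the Lipschitz constant of $R_X$ to bound the Nystr\"om eigenvalue error uniformly in $T$, yielding uniform convergence of the finite-horizon sampled DRF $D_{T,M}$ to $D_T$ and thereby legitimizing the exchange; the expected monotone approach $D_M(\theta_M(R))\uparrow D(R)$ together with the uniform bound $D_M\le\sigma_X^2$ (finer sampling reconstructs more of $X$) provides a consistency check on both the existence and the value of the limit.
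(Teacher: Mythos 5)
Your proposal follows essentially the same route as the paper's proof: sample $X(\cdot)$ at rate $M/T_0$ to get a discrete-time CS process of period $M$, identify its PSD-PC matrix with $\mathbf S_X\left(e^{2\pi i\phi}\right)$, apply Theorem~\ref{thm:main_result_discrete}, rescale the rate by $M/T_0$, and then justify the interchange of the $T\to\infty$ and $M\to\infty$ limits via a uniform-in-$T$ eigenvalue perturbation bound obtained from the Lipschitz assumption (the paper implements your ``Nystr\"om error bound'' concretely by comparing $K$ with a piecewise-constant kernel and invoking a Weyl--Wielandt inequality). You correctly isolated the delicate step and the mechanism that resolves it, so no substantive gap remains.
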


\subsubsection*{Proof sketch} The proof idea is to use a CS discrete-time process that approximates $X(\cdot)$. This approximation becomes tighter as $M$ increases, so that the limit in \eqref{eq:main_thm_DR} converges to the DRF of the continuous-time process coincides. The proof details are given in Appendix~\ref{sec:proof_main}. \\

\subsection*{Discussion}
The expression \eqref{eq:main_result_cont} is obtained by taking the limit in \eqref{eq:main_result_discrete} over the time-period of a discrete-time CS process, where the code rate $R$ is appropriately adjusted to bits per time unit. Although \eqref{eq:main_result_cont} only provides the DRF in terms of a limit, this limit is associated with the intra-cycle time resolution and not with the time horizon as in \eqref{eq:DRF_KL}. This fact allows us to express the DRF in terms of the spectral properties of the process, which captures `memory' in the process over the entire time horizon. \par
We note that limits of the form \eqref{eq:main_thm_DR} have been obtained in closed-form using Szeg\H o's Toeplitz distribution theorem \cite[Section 5.2]{grenander1958toeplitz} when the underlying process is stationary and the matrix considered is Toeplitz \cite{gallager1970information, berger1971rate,1054470} or block Toeplitz \cite{BlockToeplitz,ShannonMeetsNyquist}. Unfortunately, the matrix in \eqref{eq:main_result_entries} is not Toeplitz or block Toeplitz so Szeg\H o's theorem is not applicable. 
In the following section we provide a few examples where the limit in \eqref{eq:main_thm_DR} can be obtained in closed form which lead to a closed form expression for the DRF. \\

Expression \eqref{eq:main_result_cont} can be seen as the extension to CS of the waterfilling expression \eqref{eq:SKP_cont} derived for stationary processes. While the latter can be understood as the limiting result of coding over orthogonal frequency bands \cite{Berger1998}, expression \eqref{eq:main_result_cont} implies that the DRF for CS processes is the result of two orthogonalization procedures: (1) over the PC inside a cycle, which is associated with the eigenvalues decomposition of the PSD-PC, and (2) over different frequency bands of the stationary processes resulting from the first orthogonalization. \par
The decomposition of the process into its stationary PCs can be further exploited to derive a lower bound on the DRF, which become tight as these PC become highly correlated. This bound is explored in the next section.

\section{Lower Bound \label{sec:bound}}
In this section we derive a lower bound on the DRF of a Gaussian CS process. This lower bound is expressed only in terms of the PCs of the process and does not require the eigenvalue evaluation of Theorems~\ref{thm:main_result_discrete} and \ref{thm:main_result}. The basis for this bound is the following proposition, which holds for any source distribution and distortion measure (although we will consider here only the quadratic Gaussian case).
\begin{prop}\label{prop:lower_bound} Let $\mathbf{X}[\cdot]$ be a vector valued process of dimension $M$. The distortion-rate function
of $\mathbf{X}[\cdot]$ satisfies
\begin{equation}
D_{\mathbf{X}}\left(R\right)\geq\frac{1}{M}\sum_{m=0}^{M-1}D_{X_m}\left(R\right).\label{eq:lower_bound}
\end{equation}
\end{prop}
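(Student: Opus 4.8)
The plan is a direct argument by coordinate projection: from any code for the vector source $\mathbf{X}[\cdot]$ I would extract, for each scalar component $X_m[\cdot]$, the induced code obtained by keeping only the $m$-th reconstruction coordinate, and then compare rates and distortions component by component. I would work throughout with the information-theoretic definition of the DRF (an optimization of the mutual-information rate subject to a distortion constraint) rather than with the spectral formula \eqref{eq:vector_RD}, since the claim is asserted for an arbitrary source and distortion measure. The only structural assumption I would use is that the vector distortion is the normalized sum $d(\mathbf{x},\hat{\mathbf{x}})=\frac{1}{M}\sum_{m=0}^{M-1} d_m(x_m,\hat{x}_m)$ of per-component distortions, with the rate measured per vector sample; this additive form is exactly what supplies the generality in the statement.

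The first step produces a rate comparison. Fix $R$ and take a sequence of joint distributions $P_{\mathbf{X}\hat{\mathbf{X}}}$ with per-sample information rate $\lim_{N}\frac{1}{N}I(\mathbf{X}^N;\hat{\mathbf{X}}^N)\le R$ whose distortions approach $D_{\mathbf{X}}(R)$. For each such distribution and each $m$, the reconstruction $\hat{X}_m$ is a coordinate of $\hat{\mathbf{X}}$ and $X_m$ a coordinate of $\mathbf{X}$, so at every blocklength $N$ the data-processing and chain-rule inequalities give $I(X_m^N;\hat{X}_m^N)\le I(X_m^N;\hat{\mathbf{X}}^N)\le I(\mathbf{X}^N;\hat{\mathbf{X}}^N)$. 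Dividing by $N$ and letting $N\to\infty$ shows that for each $m$ the projected pair $(X_m,\hat{X}_m)$ is a valid code for the scalar source $X_m[\cdot]$ whose information rate $r_m$ satisfies $r_m\le R$.

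The second step converts this into the distortion bound. Since the induced code for $X_m[\cdot]$ operates at rate $r_m\le R$, the definition of the component DRF together with its monotonicity gives $\mathbb{E}\,d_m(X_m,\hat{X}_m)\ge D_{X_m}(r_m)\ge D_{X_m}(R)$. Averaging over $m$ and using the additive form of the distortion, every distribution in the chosen sequence satisfies $\frac{1}{M}\sum_{m}\mathbb{E}\,d_m(X_m,\hat{X}_m)\ge\frac{1}{M}\sum_{m}D_{X_m}(R)$; passing to the limit on the left, where the distortions tend to $D_{\mathbf{X}}(R)$, yields \eqref{eq:lower_bound}.

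The step requiring the most care is the bookkeeping of normalizations. One must verify that the rate charged to the vector source and the rate available to each component are measured on the same per-sample scale, so that the data-processing inequality delivers exactly $R$, and not $MR$, as the admissible component rate; this is precisely the point that makes the bound collapse to equality when the components are deterministic functions of one another (maximally correlated) and become strict when they are independent, in agreement with the tightness behaviour claimed for the bound. The remaining points — pushing the finite-blocklength data-processing inequalities through the limit in $N$, and replacing an exactly optimal code by a distortion-minimizing sequence when the infimum defining $D_{\mathbf{X}}(R)$ is not attained — are routine.
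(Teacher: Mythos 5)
Your proof is correct and follows essentially the same route as the paper's: the paper's one-line argument is that any rate-$R$ code for $\mathbf{X}[\cdot]$ induces a rate-$R$ code on each coordinate $X_m[\cdot]$, which cannot beat that coordinate's own DRF, and your write-up is simply the careful version of this projection argument in the mutual-information formulation (data processing plus chain rule for the rate, monotonicity of $D_{X_m}$ for the distortion, and averaging via the additive per-component distortion). Your explicit attention to the per-sample rate normalization is the right point to flag, and the argument is complete.
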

\begin{proof}
Any rate $R$ code for the process $\mathbf X[\cdot]$ induces a rate $R$ code on each of the coordinates $X_m[\cdot]$, $m=0,...,M-1$. At each coordinate, this code cannot achieve lower distortion than the optimal rate $R$ code for that coordinate.
\end{proof}



Proposition~\ref{prop:lower_bound} applied to Gaussian CS processes leads to the following result:
\begin{prop}\label{prop:lower_bound_disc}
Let ${X}[\cdot]$ be a discrete-time Gaussian CS process with period $M\in \mathbb N$. The distortion rate function of ${X}[\cdot]$ satisfies
\begin{equation} \label{eq:lower_bound_discrete}
D (\bar{R})\geq \frac{1}{M}\sum_{m=0}^{M-1}\int_{-\frac{1}{2}}^{\frac{1}{2}} \min \left \{ S_{X^m} \left(e^{2\pi i \phi} \right) ,\theta_m \right\} d\phi,
\end{equation}
where for each $m=0,\ldots,M-1$, $\theta_m$ satisfies
\begin{equation}
\label{eq:lower_bound_R}
\bar{R}(\theta_m)=\frac{1}{2} \int_{-\frac{1}{2}}^{\frac{1}{2}} \log^+ \left[S_{X^m}\left(e^{2\pi i \phi} \right)/\theta_m \right] d\phi.
\end{equation}
\end{prop}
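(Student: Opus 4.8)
\emph{Proof proposal.} The plan is to obtain \eqref{eq:lower_bound_discrete}--\eqref{eq:lower_bound_R} by specializing the general coordinate-wise bound of Proposition~\ref{prop:lower_bound} to the stationary vector-valued process $\mathbf X^M[\cdot]$ of \eqref{eq:associated_stationary}, whose coordinates are precisely the polyphase components $X^m[\cdot]$ of the source. First I would recall, from the proof of Theorem~\ref{thm:main_result_discrete}, that the DRF of the scalar CS process $X[\cdot]$ coincides with the DRF of $\mathbf X^M[\cdot]$, the correspondence $X[\cdot]\mapsto\mathbf X^M[\cdot]$ being a norm- and mutual-information-preserving bijection between blocks of the two processes. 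This identification lets me replace $D(\bar R)$ by $D_{\mathbf X^M}(\bar R)$ and apply Proposition~\ref{prop:lower_bound}, giving $D(\bar R)\ge\frac1M\sum_{m=0}^{M-1}D_{X^m}(\cdot)$ in terms of the scalar DRFs of the individual coordinates.

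Next I would use \eqref{eq:discerete_R_cross}, which shows that each polyphase component $X^m[\cdot]$ is itself a scalar stationary Gaussian process, with autocorrelation $R_{X^m X^m}[k]=R_X[m,Mk]$ and hence power spectral density $S_{X^m}\left(e^{2\pi i\phi}\right)$. For a stationary Gaussian source the DRF is given in closed form by the Pinsker reverse-waterfilling formula \eqref{eq:SKP}: its distortion equals $\int_{-1/2}^{1/2}\min\left\{S_{X^m}\left(e^{2\pi i\phi}\right),\theta_m\right\}d\phi$ and its rate equals $\frac12\int_{-1/2}^{1/2}\log^+\left[S_{X^m}\left(e^{2\pi i\phi}\right)/\theta_m\right]d\phi$, the water level $\theta_m$ being fixed by the rate constraint. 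Substituting these parametric expressions, one coordinate at a time, into the bound from the previous paragraph yields exactly \eqref{eq:lower_bound_discrete} with each $\theta_m$ determined by \eqref{eq:lower_bound_R}.

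The step that will require the most care is the bookkeeping of the rate normalization. Since each coordinate $X^m[\cdot]$ is an $M$-fold decimation of $X[\cdot]$, a single sample of $\mathbf X^M[\cdot]$ carries $M$ source symbols, so the rate that the code induced in Proposition~\ref{prop:lower_bound} delivers to each coordinate must be reconciled with the per-source-symbol rate $\bar R$ of $X[\cdot]$ and with the per-sample rate entering \eqref{eq:SKP} for the decimated process. Verifying that the water level $\theta_m$ in \eqref{eq:lower_bound_R} is pinned down by the correct scalar rate for coordinate $m$ at the operating point $\bar R$ is the one place where the decimation factor must be tracked explicitly; once the units are aligned, the result is an immediate substitution of \eqref{eq:SKP} into Proposition~\ref{prop:lower_bound}.
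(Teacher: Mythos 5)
You take the same route as the paper: identify $D(\bar R)$ with the DRF of the polyphase vector $\mathbf X^M[\cdot]$, apply the coordinate-wise bound of Proposition~\ref{prop:lower_bound}, and evaluate each coordinate's DRF by the scalar reverse waterfilling \eqref{eq:SKP} over $S_{X^m}\left(e^{2\pi i \phi}\right)$; the paper's own proof is exactly this two-line argument.

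The rate-normalization issue you flag and then defer is, however, the entire content of the step, and it does not resolve in favor of the statement as printed. A rate-$\bar R$ code for $X[\cdot]$ spends $NM\bar R$ bits on a block containing $N$ samples of each polyphase component, and the decoder restricted to coordinate $m$ has the whole bit string at its disposal; the induced code on $X^m[\cdot]$ therefore operates at $M\bar R$ bits per component sample, not $\bar R$. Consequently the water level of component $m$ is pinned by $M\bar R=\frac12\int_{-1/2}^{1/2}\log^+\left[S_{X^m}\left(e^{2\pi i\phi}\right)/\theta_m\right]d\phi$, i.e.\ \eqref{eq:lower_bound_R} should carry a factor $\frac{1}{2M}$ in place of $\frac{1}{2}$. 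This is the normalization that matches the $\frac{1}{2T_0}$ in Proposition~\ref{prop:lower_bound_cont} and the ``multiply by $M/T_0$'' step in Appendix~\ref{sec:proof_bound_cont}. With the factor $\frac{1}{2}$ as written the bound actually fails: for the period-$M$ staircase source whose polyphase components are identical copies of a white process of variance $\sigma^2$ one has $D(\bar R)=\sigma^2 2^{-2M\bar R}$, while the right-hand side of \eqref{eq:lower_bound_discrete} would equal $\sigma^2 2^{-2\bar R}$. So do carry out the bookkeeping you postponed --- it changes the constant in \eqref{eq:lower_bound_R}, and your argument is complete (and consistent with the continuous-time counterpart) once you do.
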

Here
\[
S_{X^m}\left(e^{2\pi i \phi} \right) \triangleq S_{X^mX^m}\left(e^{2\pi i \phi} \right)  =  \frac{1}{M}\sum_{n=0}^{M-1}S^m_{{X}}\left(e^{2\pi i \frac{\phi-n}{M}} \right)
\]
is the PSD of the $m^\textrm{th}$ PC of $X[\cdot]$.\\

\subsubsection*{Proof} The claim is a direct application of Proposition~\ref{prop:lower_bound} to our case of a discrete-time CS process: the summands on the RHS of \eqref{eq:lower_bound_discrete} are the individual DRF of the PCs $X^m[\cdot]$, $m=0,\ldots,M-1$, of $X[\cdot]$ obtained by \eqref{eq:SKP}. \\

Proposition~\ref{prop:lower_bound_disc} can be extended to the continuous-time case by approximating the outer integral in \eqref{eq:lower_bound_cont} by finite sums. This yields the following result:
\begin{prop}\label{prop:lower_bound_cont}
Let $X(\cdot)$ be a continuous-time Gaussian cyclostationary process with period $T_0>0$ and correlation function $R_X(t,\tau)$ Lipschitz continuous in its second argument. The distortion rate function of $X(\cdot)$ satisfies
\begin{align} \label{eq:lower_bound_cont} 
D(R) \geq \frac{1}{T_0} \int_0^{T_0} \int_{-\frac{1}{2}}^{\frac{1}{2}} \min \left \{ \sum_{n\in \mathbb Z} S_X^{~t}\left(\frac{\phi-n}{T_0} \right),\theta_t \right\} d\phi dt,
\end{align}
where for each $0\leq t \leq T_0$, $\theta_t$ satisfies
\begin{equation} \label{eq:lower_bound_cont_R}
R(\theta_t)=\frac{1}{2T_0} \int_{-\frac{1}{2}}^{\frac{1}{2}} \log^+ \left[ \sum_{n\in \mathbb Z} S_X^{~t} \left(\frac{\phi-n}{T_0} \right) /\theta_t \right] d\phi.
\end{equation}
\end{prop}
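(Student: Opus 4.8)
\emph{Proof proposal.} The plan is to obtain the continuous-time bound as the limit of the discrete-time bound of Proposition~\ref{prop:lower_bound_disc}, in exactly the spirit in which Theorem~\ref{thm:main_result} is obtained from Theorem~\ref{thm:main_result_discrete}. For each $M\in\mathbb N$ I would introduce the sampled process $X_M[k]\triangleq X(kT_0/M)$, which is a discrete-time Gaussian CS process of period $M$. Its stationary components, defined by \eqref{eq:stationary_component_discrete}, are exactly the continuous-time polyphase components \eqref{eq:PC_def} restricted to the grid $t=mT_0/M$: indeed $X_M^m[n]=X(nT_0+mT_0/M)=X^{mT_0/M}[n]$, sampled at period $T_0$. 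Hence, by \eqref{eq:stationary_PSD}, the PSD of the $m$-th component of $X_M$ is $S_{X^{mT_0/M}}\expphi=\tfrac{1}{T_0}\sum_{n\in\mathbb Z}S_X^{~mT_0/M}\!\left(\tfrac{\phi-n}{T_0}\right)$, which is, up to the $\tfrac{1}{T_0}$ aliasing factor tracked below, precisely the spectral object inside the minimum in \eqref{eq:lower_bound_cont}.

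Applying Proposition~\ref{prop:lower_bound_disc} to $X_M$ then bounds its DRF by $\frac{1}{M}\sum_{m=0}^{M-1}\int_{-1/2}^{1/2}\min\left\{S_{X^{mT_0/M}}\expphi,\theta_m\right\}d\phi$, where each $\theta_m$ solves the per-component reverse-waterfilling rate equation \eqref{eq:SKP_R} for $X^{mT_0/M}$, and where the discrete per-symbol rate must be converted to the continuous bitrate $R$ through the same factor $M/T_0$ used in passing from \eqref{eq:main_result_discrete_R} to \eqref{eq:main_R}. The key observation is that this expression is a Riemann sum: setting $\theta_t\triangleq T_0\theta_m$ to absorb the aliasing factor and writing $g(t)=\int_{-1/2}^{1/2}\min\left\{\sum_{n\in\mathbb Z}S_X^{~t}\!\left(\tfrac{\phi-n}{T_0}\right),\theta_t\right\}d\phi$ with $\theta_t$ given by \eqref{eq:lower_bound_cont_R}, the weight $\tfrac1M=\tfrac{1}{T_0}\cdot\tfrac{T_0}{M}$ makes $\frac1M\sum_{m=0}^{M-1}g(mT_0/M)$ a Riemann sum of mesh $T_0/M$ for $\frac{1}{T_0}\int_0^{T_0}g(t)\,dt$.

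To conclude, I would combine this with Theorem~\ref{thm:main_result}, which gives $D(R)=\lim_{M\to\infty}D_{X_M}(R)$, i.e. the DRF of the sampled process converges to that of $X(\cdot)$. Since $D_{X_M}(R)$ dominates the discrete bound for every $M$, letting $M\to\infty$ yields $D(R)\ge\lim_M\frac1M\sum_{m=0}^{M-1}g(mT_0/M)=\frac{1}{T_0}\int_0^{T_0}g(t)\,dt$, which is \eqref{eq:lower_bound_cont}, the rate relation \eqref{eq:lower_bound_cont_R} emerging as the $M\to\infty$ limit of the per-component rate equations.

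The main obstacle is the justification of these two limits. First, one must invoke the approximation argument underlying Theorem~\ref{thm:main_result} to guarantee $D_{X_M}(R)\to D(R)$; this is where I would lean on the continuous-time result rather than reprove it. Second, and more delicately, one must show that the Riemann sums genuinely converge to $\frac{1}{T_0}\int_0^{T_0}g(t)\,dt$, and this is where the hypothesis that $R_X(t,\tau)$ is Lipschitz in $\tau$ enters: it makes the aliasing series $\sum_{n\in\mathbb Z}S_X^{~t}\!\left(\tfrac{\phi-n}{T_0}\right)$ converge and, together with the boundedness and periodicity of $R_X$, renders $t\mapsto S_{X^t}\expphi$ continuous, hence $t\mapsto\theta_t$ continuous and $g$ Riemann integrable. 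A dominated-convergence argument (with the uniform-in-$M$ control supplied by $\sigma_X^2<\infty$) would then legitimize interchanging the limit with the frequency integral and the passage from the finite sum to the $t$-integral. Throughout, I would track the $T_0$ scaling factors arising in the rate and distortion conversions, as they fix the precise normalization of $\theta_t$ in \eqref{eq:lower_bound_cont}--\eqref{eq:lower_bound_cont_R}.
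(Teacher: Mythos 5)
Your proposal follows essentially the same route as the paper's proof: sample $X(\cdot)$ at rate $M/T_0$ to get a discrete-time CS process of period $M$, apply the discrete per-component lower bound of Proposition~\ref{prop:lower_bound_disc} with the rate converted by the factor $M/T_0$, recognize the resulting average over $m$ as a Riemann sum in $t=mT_0/M$ converging to the integral over $(0,T_0)$, and invoke the convergence $D_{X_M}(R)\to D(R)$ established in the proof of Theorem~\ref{thm:main_result}. The paper's own argument is terser (it asserts the Riemann-sum convergence without the continuity discussion you supply), so your additional justification of the two limits is a refinement rather than a deviation.
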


\subsubsection*{Proof}
See Appendix~\ref{sec:proof_bound_cont}. 


The bound \eqref{eq:lower_bound_discrete} is obtained by averaging the minimal distortion at rate $R$ in describing each one of the PCs of $X(\cdot)$. For each such component $X^t[\cdot]$ there is an associated water level $\theta_t$ obtained by solving \eqref{eq:lower_bound_cont_R} for $\theta_t$. For $R=0$, $\theta_t$ is always bigger than the essential supremum of 
\[
S_{X^t}\left( e^{2\pi i \phi}\right) = \frac{1}{T_0} \sum_{n\in \mathbb Z} S_X^{~t} \left(\frac{\phi-n}{T_0} \right),
\]
so the RHS of \eqref{eq:lower_bound_discrete} equals the average over the total power of each one of the PCs of ${X}(\cdot)$ which are summed to $\sigma_X^2 = D_{{X}}(0)$. On the other hand, if $R \rightarrow \infty$ then $\theta_t\rightarrow 0$ for all $t\in[0,T_0]$, and again equality holds in \eqref{eq:lower_bound_discrete}. That is, the bound is tight in the two extremes of $R=0$ and $R\rightarrow \infty$. \par

From a source coding point of view, the bound \eqref{eq:lower_bound_cont} can be understood as if a source code of rate $R$ is applied to each of the PCs of $X(\cdot)$ individually. On the other hand, the DRF in \eqref{eq:main_result_cont}  is obtained by applying a single rate $R$ code to describe all these PC simultaneously. As a result, the bound is tight only when all the PCs are maximally correlated, i.e. when a single PC determines the rest of them. A case where the latter hold is shown in the following example. 
\begin{example}[equality in \eqref{eq:lower_bound_cont}]
Let $X(\cdot)$ be the PAM signal of Example~\ref{ex:PAM} where the pulse $p(t)$ is given by
\[
p(t) = \begin{cases} 1 & 0\leq t < T_0, \\
0 & \textrm{otherwise}.
\end{cases}
\]
The sample path of $X(\cdot)$ has a staircase shape illustrated in Figure~\ref{fig:staricase}. This process is equivalent to the discrete time process $\bar{U}[\cdot] \triangleq \left\{U(nT_0),\,n\in \mathbb Z\right\}$ both in information rate and squared norm per period $T_0$, which is enough to conclude that $D_{X}(R) = D_{U}(R T_0)$. Indeed, the PCs in this case are maximally correlated, in the sense that a realization of $X^0[\cdot]=\left\{X(nT_0),\,n\in \mathbb Z\right\}$ determines the value of $X^\Delta[\cdot] = \left\{X((n+\Delta)T_0),\,n\in \mathbb Z \right\}$ for all $0\leq \Delta<1$. In addition, for all $0\leq t \leq T_0$ we have
\[
S_{X}^{~t} \expphi = S_{X}^0 \expphi =  \sum_{n\in \mathbb Z} S_U\left(\frac{\phi-n}{T_0} \right),
\]
where the latter is the PSD of the discrete time process $\bar{U}[\cdot]$, so \eqref{eq:SKP_cont} implies that the RHS of \eqref{eq:lower_bound_cont} is the DRF of $\bar{U}[\cdot]$. We therefore conclude that the DRF of $X(\cdot)$ is given by the RHS of \eqref{eq:lower_bound_cont}.
\end{example}

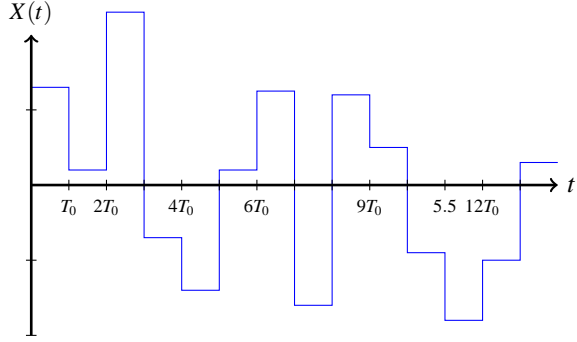
\begin{figure}
\begin{center}
\begin{tikzpicture}
\draw[color = blue] (0,1.3) -- (0.5,1.3) -- (0.5,0.2) -- (1,0.2) -- (1,2.3) -- (1.5,2.3) -- (1.5,-0.7) -- (2,-0.7) -- (2,-1.4) -- (2.5,-1.4) -- (2.5,0.2) -- (3,0.2)--(3,1.25)--(3.5,1.25)--(3.5,-1.6)--(4,-1.6) -- (4,1.2) -- (4.5,1.2) -- (4.5,0.5) -- (5,0.5)--(5,-0.9) -- (5.5,-0.9) -- (5.5,-1.8) --(6,-1.8) -- (6,-1) --(6.5,-1)--(6.5,0.3) -- (7,0.3);
  
\foreach \x/\xtext in {0/,0.5/{T_0},1/{2T_0},1.5/,2/{4T_0},2.5/,3/{6T_0},3.5/,4/,4.5/{9T_0},5/,5.5,6/{12T_0},6.5/}
 \draw[shift={(\x,0)}] (0pt,2pt) -- (0pt,-2pt) node[below] {\scriptsize $\xtext$};;
  \foreach \y/\ytext in {-2/,-1/,1/}
    \draw[shift={(0,\y)}] (2pt,0pt) -- (-2pt,0pt) node[left] {$\ytext$};

\draw[->,line width=1pt]  (0,0)--(7,0) node[right] {$t$};
\draw[->,line width=1pt]  (0,-2)--(0,2) node[above] {\small $X(t)$};
\end{tikzpicture}
\caption{\label{fig:staricase}  
An example of a continuous-time PAM process that attains equality in \eqref{eq:lower_bound_cont}. 
}
\end{center}
\end{figure}

\section{Applications\label{sec:applications}}
In this section we apply the expression obtained in Theorem~\ref{thm:main_result} to study the distortion-rate performance of a few CS processes that arise in practice.

\tikzstyle{int}=[draw, fill=blue!10, minimum height = 1cm, minimum width=1.5cm,thick ]

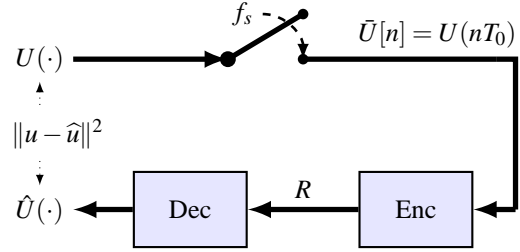
\begin{figure}
\begin{center}
\begin{tikzpicture}[node distance=2cm,auto,>=latex]
  \node at (0,0) (source) {$U(\cdot)$} ;
  \node [coordinate, right of = source,node distance = 2.5cm] (smp_in) {};
  \node [coordinate, right of = smp_in,node distance = 1cm] (smp_out){};
	\node [coordinate,above of = smp_out,node distance = 0.6cm] (tip) {};
\fill  (smp_out) circle [radius=2pt];
\fill  (smp_in) circle [radius=3pt];
\fill  (tip) circle [radius=2pt];
\node[left,left of = tip, node distance = 0.8 cm] (ltop) {$f_s$};
\draw[->,dashed,line width = 1pt] (ltop) to [out=0,in=90] (smp_out.north);
\node [right of = smp_out, node distance = 2.7cm] (right_edge) {};
\node [below of = right_edge] (right_b_edge) {};

\node [right] (dest) [below of=source]{$\hat{U}(\cdot)$};
\node [int] (dec) [right of=dest, node distance = 2cm] {$\mathrm{Dec}$};
\node [int] (enc) [right of = dec, node distance = 3cm]{$\mathrm{Enc}$};

  \draw[-,line width=2pt] (smp_out) -- node[above, xshift = 0.5cm] {$\bar{U}[n]=U(nT_0)$} (right_edge);
  \draw[-,line width = 2]  (right_edge.west) -| (right_b_edge.east);
    \draw[->,line width = 2]  (right_b_edge.east) -- (enc.east);
   \draw[->,line width=2pt] (enc) -- node[above] {$R$} (dec);

\draw[<->,dotted] (source) -- node[xshift = -0.5cm, fill = white] {$\|u-\widehat{u}\|^2$} (dest);
   \draw[->,line width=2pt] (dec) -- (dest);
    \draw[->, line width = 2pt] (source) -- (smp_in);
    \draw[line width=2pt]  (smp_in) -- (tip);
    
\end{tikzpicture}
\caption{\label{fig:sub_nyquist_model} Combined sampling and source coding system model.}
\end{center}
\end{figure}

\subsection{Combined Sampling and Source Coding \label{subsec:combined}}
We begin with the distortion-rate performance in the combined sampling and source coding problem considered in \cite{Kipnis2014}. This problem is described by the system of Figure~\ref{fig:sub_nyquist_model}: the source $U(\cdot)$ is a Gaussian stationary process with a known PSD $S_U(f)$. The source is uniformly sampled at rate $f_s = T_s^{-1}$, resulting in the discrete time process $\bar{U}[\cdot]$ defined by $\bar{U}[n]=U(n/f_s)$. The process $\bar{U}[\cdot]$ is then encoded at rate $R$ bits per time unit. The goal is to estimate the source $U(\cdot)$ from its sampled and encoded version under a quadratic distortion. We denote by the function $D_{U|\bar{U}}(f_s,R)$ the minimal distortion attainable in this estimation, where the minimization is over all collections of encoders and decoders operating at bitrate $R$. Note that if $U(\cdot)$ is sampled above its Nyquist rate, then there is no loss of information in the sampling operation, and we get
\[
D_{U|\bar{U}}(f_s,R) = D_{U}(R),
\]
where $D_U(R)$ is found by \eqref{eq:SKP_cont}. Therefore, the case of most interest is that of sub-Nyquist sampling of $U(\cdot)$. In what follows we use Theorem~\ref{thm:main_result} to derive $D_{U|\bar{U}}(f_s,R)$ in closed form.\\

Our first observation is that the combined sampling and source coding problem of Fig.~\ref{fig:sub_nyquist_model} can be seen as an indirect source coding problem \cite{1057738}: the distortion is measured with respect to the process $U(\cdot)$, but a different process, namely $\bar{U}[\cdot]$, is available to the encoder. Wolf and Ziv \cite{1054469} have shown that the optimal source coding scheme under quadratic distortion for this class of problems is obtained as follows: the encoder first obtains the minimal mean square error (MMSE) estimate of the unseen source, and then an optimal source code is applied to describe this estimated sequence to the decoder. In the setting of Figure~\ref{fig:sub_nyquist_model}, this implies that $D_{U|\bar{U}}(f_s,R)$ is attained by first obtaining the MMSE estimate 
\[
\widetilde{U}(t) = \mathbb E\left[U(t)|\bar{U}[\cdot] \right]
\]
at the encoder, and then solving a standard source coding problem with the process $\widetilde{U}(\cdot)$ as the process to which the source code is applied. Moreover, this scheme implies that the distortion decomposes into 2 parts:
\begin{align} \label{eq:decomp}
D_{U|\bar{U}}(f_s,R) = {\mmse}(U|\bar{U}) + D_{\widetilde{U}}(R),
\end{align}
where ${\mmse}({U|\bar{U}})$ is the MMSE in estimating $U(\cdot)$ from $\bar{U}[\cdot]$, and $D_{\widetilde{U}}(R)$ is the DRF of the process $\widetilde{U}(\cdot)$. \par

Standard linear estimation techniques \cite{815501}  leads to 
\[
\widetilde{U}(t) = \sum_{n\in\mathbb Z} \bar{U}[n]w(t-nT_0) = \sum_{n\in\mathbb Z} U(nT_0)w(t-n T_0),
\]
where the Fourier transform of $w(t)$ given by
\begin{equation}
\label{eq:hybrid_filter}
W(f)=\frac{ S_U(f)}{\sum_{k\in \mathbb Z} S_U(f-k/T_0)}.
\end{equation}
Moreover, the error in this estimation is 
\begin{equation} \label{eq:mmse_expression}
{\mmse}(U|\bar{U}) = \int_{-\infty}^\infty S_U(f) df- \int_{-\frac{1}{2T_0}}^\frac{1}{2T_0} \widetilde{S}_{W}(f) df,
\end{equation}
where
\begin{align} 
\widetilde{S}_W(f) & = \sum_{k\in \mathbb Z} \left| W(f-k/T_0) \right|^2 S_U(f-k/T_0) \label{eq:sampling_RD_J}. 
\end{align}

\tikzstyle{sint}=[draw, fill=blue!10, minimum height = 0.5cm, minimum width=0.8cm,thick ]
\tikzstyle{sum}=[circle, fill=blue!10, draw=black,line width=1pt,minimum size = 0.5cm, thick ]
\tikzstyle{ssum}=[circle, fill=blue!10,draw=black,line width=1pt,minimum size = 0.1cm]
\tikzstyle{int1}=[draw, fill=blue!10, minimum height = 0.5cm, minimum width=1cm,thick ]
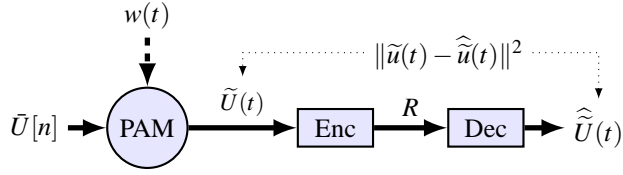
\begin{figure}
\begin{center}
\begin{tikzpicture}[node distance=2cm,auto,>=latex]
  \node at (0,0) (source) {$\bar{U}[n]$} ;
  \node [sum, right of = source,node distance = 1.5cm] (PAM) {PAM};
  \node [above of = PAM, node distance = 1.5cm] (pulse) {$w(t)$}; 

\node [int1] (enc) [right of = PAM, node distance = 2.5cm] {$\mathrm{Enc}$};
\node [int1] (dec) [right of=enc, node distance = 2cm] {$\mathrm{Dec}$};

\node [right] (dest) [right of = dec, node distance = 1.5cm]{$\widehat{\widetilde{U}}(t)$};

\draw[->,line width=2pt] (enc) -- node[above] {$R$} (dec);

\node [above of = dec, node distance = 1cm, xshift = -0.5cm] (dist) {$\|\widetilde{u}(t)-\widehat{\widetilde{u}}(t)\|^2$};

\draw[->,line width=2pt, dashed] (pulse) -- (PAM) ;
\draw[->, line width = 2pt] (source) -- (PAM);
\draw[->, line width = 2pt]  (PAM) -- node[above] (tilde) {\small $\widetilde{U}(t)$} (enc);
\draw[->,line width=2pt] (dec) -- (dest);
\draw[->, dotted] (dist) -| (dest);
\draw[->, dotted] (dist) -| (tilde);
\end{tikzpicture}
\caption{\label{fig:sub_nyquist_solution} Minimal distortion in Figure~\ref{fig:sub_nyquist_model} is obtained by PAM followed by an optimal source code for the output of this modulation.}
\end{center}
\end{figure}
We conclude from the above that $D_{\widetilde{U}}(R)$, and therefore $D_{U|\bar{U}}(f_s,R)$, is obtained by solving a source coding problem for an information source with a PAM signal structure, illustrated in Figure~\ref{fig:sub_nyquist_solution}. Since Example~\ref{ex:PAM} implies that such a signal is CS with period $T_s = f_s^{-1}$, we can apply Theorem~\ref{thm:main_result} in order to evaluate this DRF. By doing so, we obtain the following general result:
\begin{prop}[DRF of PAM-modulated signals] \label{prop:pam}
Let $X_{PAM}(\cdot)$ be defined by
\begin{equation} \label{eq:pam_def2}
 X_{PAM}(t) = \sum_{n\in \mathbb Z} U(nT_0)p(t-nT_0), \quad t\in \mathbb R,
\end{equation}
where $U(\cdot)$ is a Gaussian stationary process with\footnote{Although we only use the value of $U(t)$ at $t \in \mathbb Z T_0$, it is convenient to treat $U(\cdot)$ as continuous-time source so that the expressions emerging have only continuous-time spectrum.} PSD $S_U(f)$ and $p(t)$ is an analog deterministic signal with $\int_{-\infty}^\infty \left|p(t)\right|^2dt<\infty$ and Fourier transform $P(f)$. Assume moreover, that the covariance function $R_{X_{PAM}}(t,\tau)$ of $X_{PAM}(\cdot)$ is Lipschitz continuous in its second argument. The distortion-rate function of $X_{PAM}(\cdot)$ is given by
\begin{subequations} \label{eq:DRF_PAM}
\begin{align}
D(\theta) & =  \frac{1}{T_0}\int_{-\frac{1}{2T_0}}^{\frac{1}{2T_0}} \min\left\{\widetilde{S}(f),\theta \right\} df \label{eq:DRF_PAM_D} \\
R(\theta) & = \frac{1}{2} \int_{-\frac{1}{2T_0}}^{\frac{1}{2T_0}} \log^+\left[ \widetilde{S}(f) /\theta \right] df,
\end{align}
\end{subequations}
where
\begin{equation} \label{eq:J_def}
\widetilde{S}(f) \triangleq  \sum_{k\in \mathbb Z} \left|P(f-k/T_0)\right|^2 S_U(f-k/T_0).
\end{equation}
\end{prop}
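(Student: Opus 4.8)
The plan is to apply Theorem~\ref{thm:main_result} to $X_{PAM}(\cdot)$ and to exploit the fact that, for a PAM source, the spectral density matrix $\mathbf S_X\expphi$ of \eqref{eq:main_result_entries} collapses to rank one. Substituting the PAM cyclic spectral density \eqref{eq:PAM_CPSD} into \eqref{eq:main_result_entries}, the double sum separates into a product of a function of the row index $m$ and a function of the column index $r$, so $\mathbf S_X\expphi$ is an outer product and has only a single nonzero eigenvalue. The structural reason is that the polyphase components of the finely sampled signal are all filtered copies of the single discrete symbol sequence $U(nT_0)$: writing $X_{PAM}(nT_0+mT_0/M)=\sum_{\ell} U(\ell T_0)\,p((n-\ell)T_0+mT_0/M)$ exhibits each component as the convolution of the scalar process $\{U(nT_0)\}$ with the deterministic filter $h_m[j]=p(jT_0+mT_0/M)$. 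Because only one of the eigenvalues $\lambda_1\expphi\le\dots\le\lambda_M\expphi$ is nonzero, the outer sums over $m$ in \eqref{eq:main_D}--\eqref{eq:main_R} reduce to a single term.

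Next I would compute that nonzero eigenvalue as the trace of $\mathbf S_X\expphi$, which is the cleanest route since the off-diagonal phase $e^{2\pi i(m-r)(\phi-k)/M}$ in \eqref{eq:main_result_entries} disappears on the diagonal. Summing the diagonal over $m=0,\dots,M-1$, the geometric sum $\sum_{m} e^{2\pi i nm/M}$, which equals $M$ when $M\mid n$ and vanishes otherwise, selects only the harmonics $n\in M\mathbb Z$, giving
\[
\lambda_M\expphi=\frac{M}{T_0}\sum_{k\in\mathbb Z}\sum_{j\in\mathbb Z}\hat S_{PAM}^{Mj}\left(\frac{\phi-k}{T_0}\right).
\]
For $j\neq 0$ the summand carries the factor $P^*\big(\tfrac{\phi-k}{T_0}-\tfrac{Mj}{T_0}\big)$, and under the $\phi$-integral these contributions involve inner products $\int P(f)\overline{P(f-Mj/T_0)}\,df$ that tend to zero as $M\to\infty$, since translation to infinity tends weakly to zero in $L^2$ (recall $p\in L^2$, hence $P\in L^2$). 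Thus only the $j=0$ term survives, and using $\hat S_{PAM}^0(f)=\tfrac1{T_0}|P(f)|^2S_U(f)$ one obtains $\tfrac1M\lambda_M\expphi\to \tfrac1{T_0^2}\,\widetilde S(\phi/T_0)$ with $\widetilde S$ as in \eqref{eq:J_def}.

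It then remains to feed this limit into \eqref{eq:main_result_cont} and pass to the limit in \eqref{eq:main_thm_DR}. Since $\lambda_M$ grows linearly in $M$, the water level must be scaled accordingly: setting $\theta_M=M\theta/T_0^2$ turns $\min\{\lambda_M,\theta_M\}$ and $\log^+[\lambda_M/\theta_M]$ into $\min\{\tfrac1M\lambda_M,\theta/T_0^2\}$ and $\log^+[\tfrac1M\lambda_M/(\theta/T_0^2)]$, and the prefactor $1/M$ in \eqref{eq:main_D} exactly absorbs the $M$ produced by the collapsed single-term sum. Taking $M\to\infty$ and then changing variables $f=\phi/T_0$ (so $d\phi=T_0\,df$ and $\phi\in[-\tfrac12,\tfrac12]$ maps to $f\in[-\tfrac1{2T_0},\tfrac1{2T_0}]$) converts the two expressions into \eqref{eq:DRF_PAM_D} and its rate counterpart.

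The main obstacle is the rigorous justification of the eigenvalue limit and its insertion into the distortion-rate formulas. Two points need care. First, discarding the $j\neq 0$ aliasing terms must be done uniformly enough to survive the outer integrals; the Lipschitz continuity of $R_{X_{PAM}}(t,\tau)$ in its second argument---precisely the hypothesis under which Theorem~\ref{thm:main_result} is stated---should supply the regularity needed to control the tails of $\widetilde S$ and of the discarded terms. Second, interchanging $\lim_{M\to\infty}$ with the $\phi$-integrals in \eqref{eq:main_D}--\eqref{eq:main_R} must be controlled, which I would handle by dominated convergence once the pointwise convergence $\tfrac1M\lambda_M\to\tfrac1{T_0^2}\widetilde S$ is upgraded to an integrable domination. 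By contrast, the water-level bookkeeping---verifying that the single scaling $\theta_M=M\theta/T_0^2$ simultaneously delivers the correct normalizations in \eqref{eq:DRF_PAM}---is routine once the eigenvalue limit is established.
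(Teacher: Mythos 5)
Your proposal is correct and follows the same route as the paper's Appendix D: substitute the PAM cyclic spectral density \eqref{eq:PAM_CPSD} into \eqref{eq:main_result_entries}, observe that the entries factor into a product of an $m$-term and an $r$-term so the matrix is a rank-one outer product, identify the single nonzero eigenvalue with the trace, rescale the water level by $M$, and change variables $f=\phi/T_0$. The one substantive divergence is in the trace computation: the orthogonality sum $\sum_{m=0}^{M-1}e^{2\pi i nm/M}$ kills only the harmonics with $n\notin M\mathbb Z$, and the paper silently discards the surviving $n=Mj$, $j\neq 0$ terms as well, arriving at an expression that is exactly $\frac{M}{T_0^2}\sum_k|P(\frac{\phi-k}{T_0})|^2S_U(\frac{\phi-k}{T_0})$ for every $M$ and hence "independent of $M$," so that no limit need be taken in \eqref{eq:main_thm_DR}. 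You instead retain these aliasing terms $P(f)P^*(f-Mj/T_0)$ and argue they vanish only as $M\to\infty$ (weak convergence of translates in $L^2$), then pass to the limit by dominated convergence. Your accounting is the more faithful one for a general pulse $p\in L^2$ --- the paper's exact, $M$-independent formula is literally correct only when the cross terms vanish (e.g., band-limited $P$) --- at the cost of having to justify the interchange of limit and the waterfilling integrals, which you correctly identify as the remaining technical obligation but do not fully discharge; the water-level bookkeeping and change of variables in your last step check out.
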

\begin{proof}
See Appendix~\ref{app:pam_proof}. 
\end{proof}

Proposition~\ref{prop:pam} applied to the process $\widetilde{U}(\cdot)$ implies that its DRF $D_{\widetilde{U}}(R)$ is given by waterfilling over the function 
\[
J(f) \triangleq \sum_{k\in \mathbb Z} \left|W(f-k f_s)\right|^2 S_U(f-k f_s). 
\]
As a result, we obtain from \eqref{eq:decomp} and \eqref{eq:DRF_PAM} the following expression for the minimal distortion in the combined sampling and source coding problem:
\begin{subequations}
\label{eq:sampling_RD}
\begin{align}
D_{U|\bar{U}}(f_s,R) = \mmse(U|\bar{U}) +  \frac{1}{T_0} \int_{-\frac{1}{2T_0}}^\frac{1}{2T_0} \min\left\{ J(f) ,\theta \right\}df, 
\end{align}
where 
\begin{equation}
R(\theta) =\frac{1}{2}  \int_{-\frac{1}{2T_0}}^\frac{1}{2T_0} \log^+ \left[ J(f) /\theta \right]df.
\end{equation}
\end{subequations} 



\subsection{Information Content of Signals with PAM Structure}
Proposition~\ref{prop:pam} provides a general closed-form expression for the DRF of Gaussian processes with a PAM structure. In this subsection we use this expression to study the effect of the PAM of \eqref{eq:pam_def2} on the distortion-rate curve of the signal  $X_{PAM}(\cdot)$ at the output of the modulator. 
Assuming that two processes have the same energy over time, the process with lower DRF can be described by fewer bits per second to the same distortion level. It is therefore intuitive to think about the DRF as a measure of the \emph{information content} of the process\footnote{This notion is made precise by the notion of $\epsilon$-entropy \cite{posner1967epsilon,kolmogorov1956certain}}. \par

If we assume that the source for the symbols in the PAM is a Gaussian stationary process $U(\cdot)$, the output of the PAM of \eqref{eq:pam_def2} can be seen as a non-ideal reconstruction of $U(\cdot)$ from its uniform samples using pulses of shape $p(t)$, as illustrated in Figure~\ref{fig:PAM_example}. Since the randomness in $X_{PAM}(\cdot)$ is only due to $U(\cdot)$, we expect $X_{PAM}(\cdot)$ to have a smaller information content than $U(\cdot)$. In addition, we expect the information content of $X_{\Phi}(\cdot)$ to increase with the sampling rate $1/T_0$, and reach a saturation as this sampling rate exceeds the Nyquist rate of $U(\cdot)$. 
Indeed, when $1/T_0$ is higher than the Nyquist rate of $U(\cdot)$, the support of $S_U(f)$ is contained within $\left(-\frac{1}{2T_0}, \frac{1}{2T_0} \right)$. In this case, expression \eqref{eq:DRF_PAM} implies that the DRF of $X_{PAM}(\cdot)$ is obtained by waterfilling over the function 
\begin{equation} \label{eq:change_of_coordinate}
\widetilde{S}(f) = \left| P(f) \right|^2 S_U(f).
\end{equation}
That is, the effect of the modulation in super-Nyquist sampling is identical to the effect of a linear filter with frequency response $P(f)$ applied to $U(\cdot)$. This filtering can be understood as a linear transformation of the coordinates \cite[Ch. 22]{Shannon1948} represented by the frequency components. Assuming that $P(f)$ does not change the support of \eqref{eq:change_of_coordinate} (that is, the change in `coordinates' is invertible), the process $U(\cdot)$ can be recovered from $X_{PAM}(\cdot)$ with zero mean-square error. When the sampling frequency $1/T_0$ goes below the Nyquist rate of $U(\cdot)$, perfect recovery of $U(\cdot)$ is in general not possible. Intuitively, in this case we expect $X_{PAM}(\cdot)$ to contains less information than $U(\cdot)$, and hence we should be able to describe it under the same normalized distortion level as $U(\cdot)$ using fewer bits per time unit. 
A quantitative evaluation of this effect of the sampling rate is given in Figure~\ref{fig:PAM_example}, where the DRF of $X_{PAM}(\cdot)$ is compared to the DRF of $U(\cdot)$ for three sub-Nyquist sampling rates. Examples for the realization of $X_{PAM}(\cdot)$ and $U(\cdot)$ using sub- and super- Nyquist sampling rates are given in Figure~\ref{fig:PAM_realizations}.
\begin{figure}
\begin{center}
\begin{tikzpicture}
\node[int] at (0,0) {\includegraphics[scale=0.2]{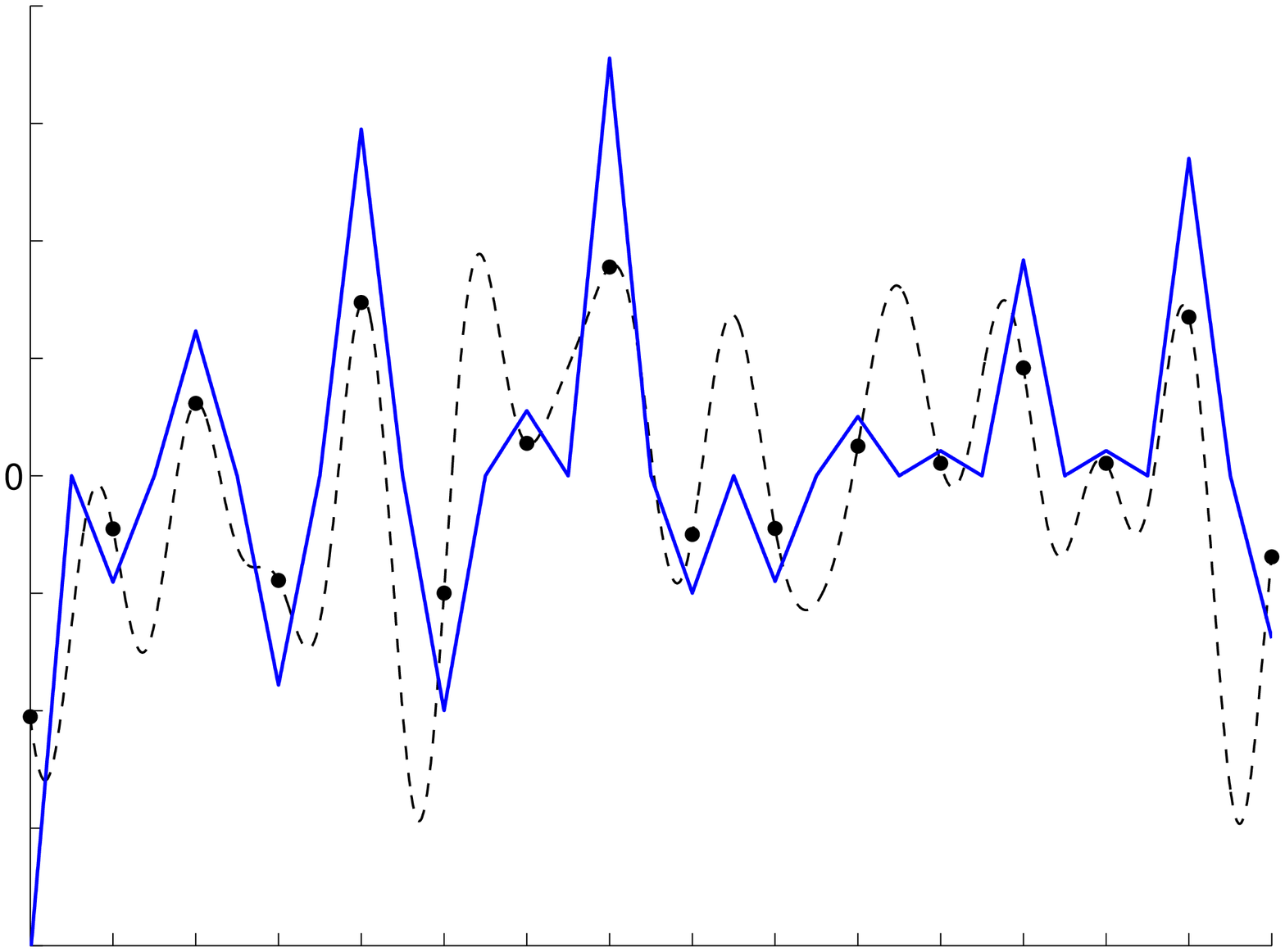}};
\node[int] at (4.5,0) {\includegraphics[scale=0.2]{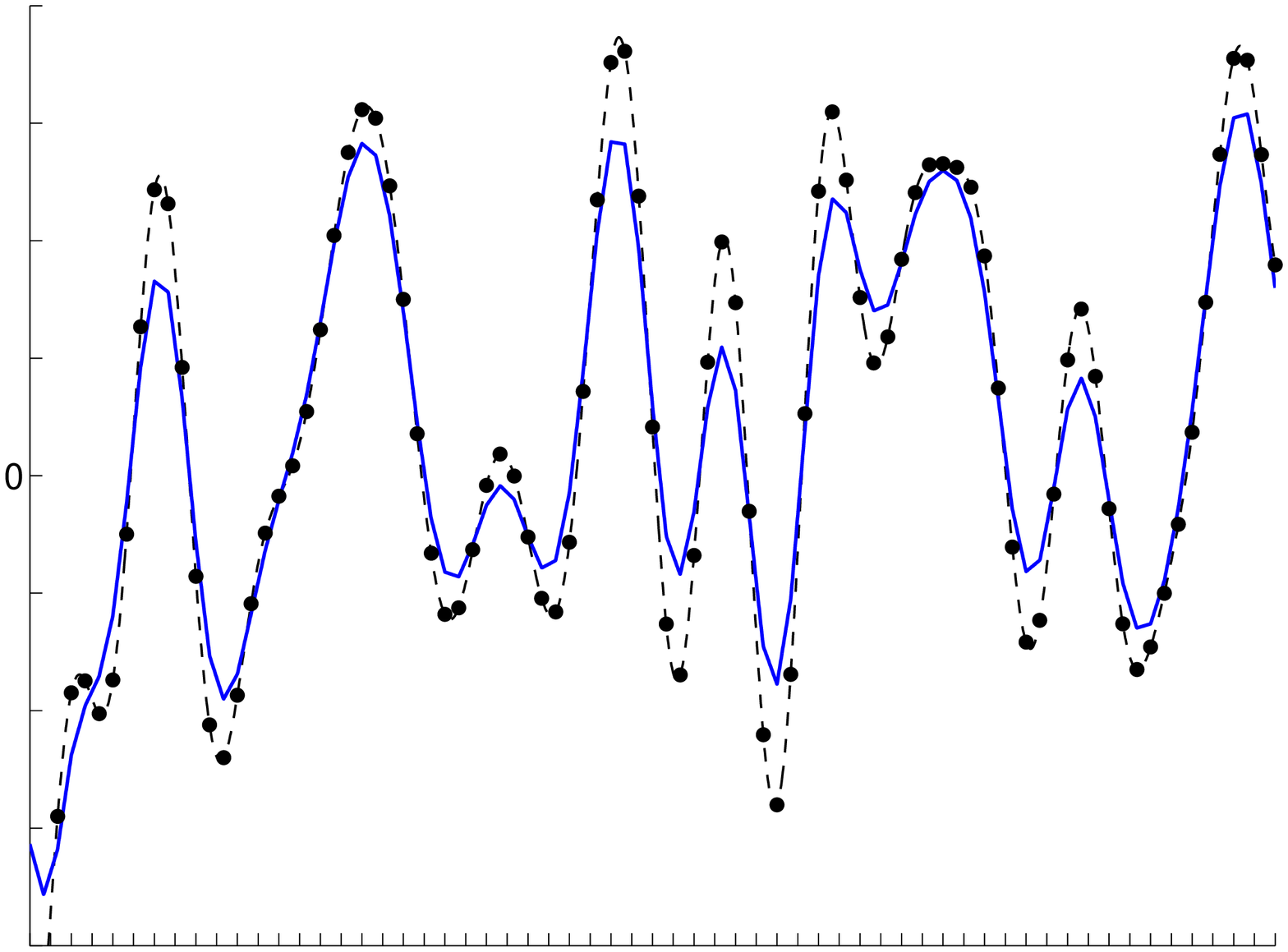}};
\end{tikzpicture}
\caption{Two realizations over time of the PAM process $X_{PAM}(\cdot)$ (blue) and the baseband process $U(\cdot)$ (dashed) with the PSD and pulse shape given in Figure~\ref{fig:PAM_example}, corresponding to sub-Nyquist (left) and super-Nyquist (right) sampling rates. Figure~\ref{fig:PAM_example} below shows that for the same target distortion, the PAM-modulated process on the left is easier to describe than the PAM-modulated process on the right.
\label{fig:PAM_realizations}}
\end{center}
\end{figure}
\begin{figure}
\begin{center}
\begin{tikzpicture}
\node at (0.215,-0.25) {\includegraphics[scale=0.4]{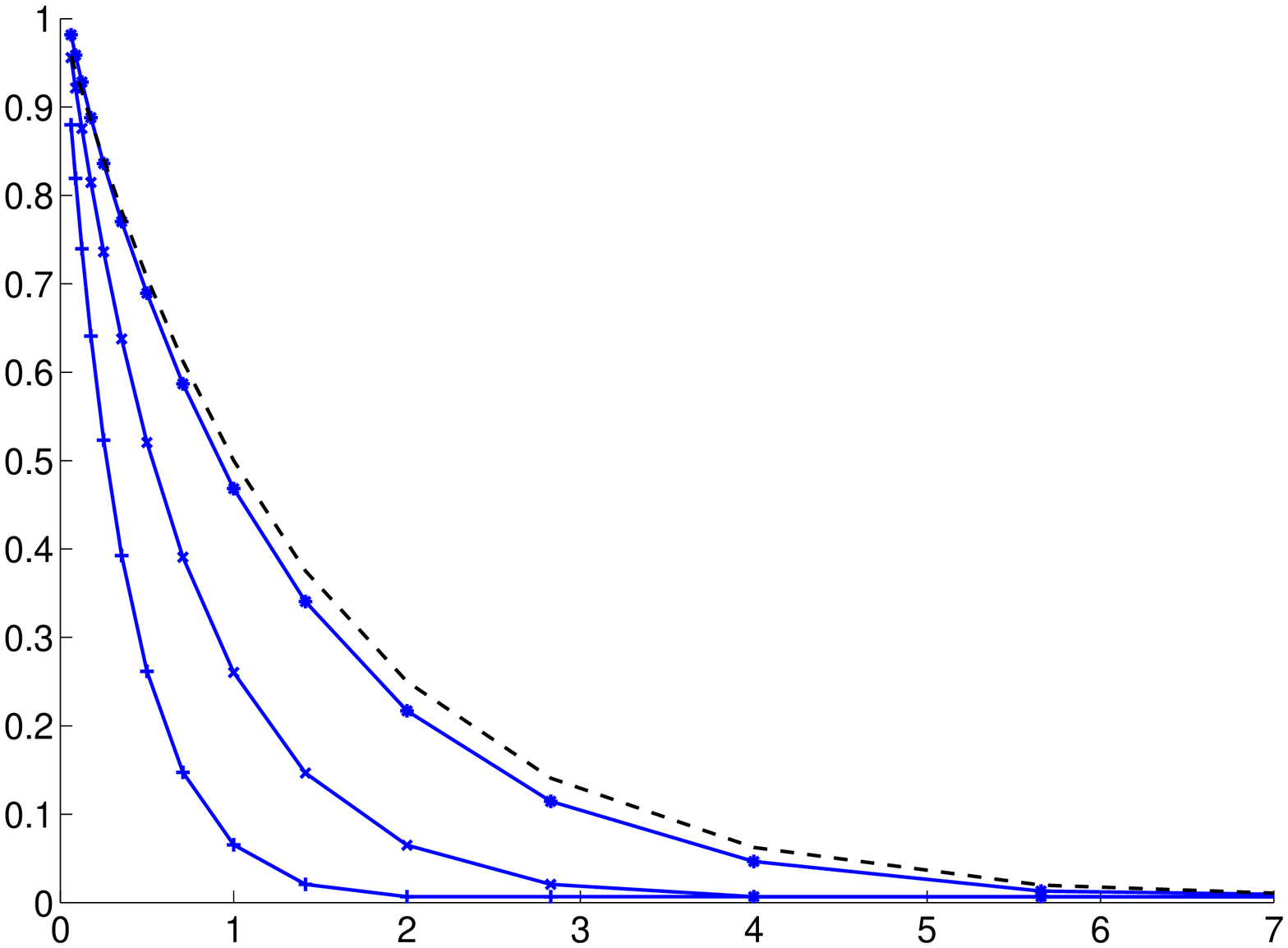}};
\node[rotate = -45] at (-1,-1.5) {\scriptsize $D_U(R)$};
\node[rotate = -45] at (-1.5,-1.8) {\scriptsize $0.9 W$};
\node[rotate = -50] at (-2.2,-2) { {\scriptsize $0.5 W$}};
\node[rotate = 310] at (-2.8,-2.4) {{\scriptsize $0.25W$}};
\draw[->] (-3.5,-3) -- (4,-3) node[right, fill = white] {$R$};
\draw[->] (-3.5,-3) -- node[above, rotate = 90, yshift = 0.4cm] {Distortion}(-3.5,3);
\draw (1,1) rectangle (3,3);
\draw[->] (2,1)  -- (2,2.5) node[above] {$p(t)$};
\draw[color = red] (1,1)--(1.2,1) -- (2,2.3) -- (2.8,1) -- (3,1); 

\draw (-1.5,1) rectangle (0.5,3);
\draw[->] (-0.5,1)  -- (-0.5,2.5) node[above] {$P(f)$};

\draw[red] plot[domain=-1:1, samples=50] (\x-0.5,{sin(deg(10*\x))*sin(deg(10*\x)) /(70*\x*\x)+1}) ;

\draw (1,-1.5) rectangle (3,0.5);
\draw[->] (2,-1.5) -- (2,0) node[above] {$S_U(f)$};
\draw[color = blue] (1,-1.5) node[below] {\scriptsize $-0.5W$} -- (1.2,-1.5) -- (1.2,-0.5) -- (2.8,-0.5) -- (2.8,-1.5) -- (3,-1.5) node[below] {\scriptsize $0.5W$}; 
\end{tikzpicture}
\caption{
The DRF of the PAM signal \eqref{eq:pam_def} for three values of sampling rate $1/T_0$ compared to the Nyquist rate $W$ of $U(\cdot)$. The DRF of the baseband stationary signal $U(\cdot)$ (assuming an energy preserving modulation) is given by the dashed curve. The PSD of $U(\cdot)$ and the shape of the pulse $p(t)$ are given in the small frames. This figure shows that the information content of in the PAM process decreases with the sampling rate. 
\label{fig:PAM_example}}
\end{center}
\end{figure}


\subsection{Amplitude Modulation with Random Phase} \label{subsec:AM}
In this section we turn back to the two processes discussed in the introduction as our motivating examples and evaluate their DRFs using Theorem~\ref{thm:main_result}. \\

Consider the process $X_\Phi(\cdot)$ obtained by modulating a stationary Gaussian process $U(\cdot)$ by a cosine-wave of frequency $f_0$ and a random phase $\Phi$ uniform over $[0,2\pi)$, as defined in  \eqref{eq:intro_random_phase}. It is an elementary exercise \cite[Ex. 8.18]{2009introduction} to show that the process $X_{\Phi}(\cdot)$ is stationary with PSD 
\begin{equation} 
S_{\Phi}(f) = \frac{1}{2} S_U(f-f_0)+ \frac{1}{2}S_U(f+f_0).
\end{equation}
From \cite[Thm. 4.6.5]{berger1971rate}, an upper bound on the DRF of $X_\Phi(t)$, denoted by $D_{X_\Phi}(R)$, is obtained by the DRF of a Gaussian process with the same PSD $S_{\Phi}(f)$ through the reverse-waterfilling \eqref{eq:SKP_cont}. However, it seems that $D_{X_\Phi}(R)$ cannot be determined solely from the second order statistics of $X_\Phi(\cdot)$. \par

The main obstacle in deriving $D_{X_\Phi}(R)$ is the random phase of $X_{\Phi}[\cdot]$, which makes the process non-Gaussian and non-ergodic. This random phase can be handled using an asynchronous block code \cite[Ch. 11.6]{gray1990entropy}, i.e. by adding a short prefix consisting of a source synchronization word to each block. Indeed, the following proposition follows directly from the proof of Theorem 11.6.1 in \cite{gray1990entropy}:
\begin{prop} \label{prop:am_to_rp}
For any $\varphi \in [0,2\pi)$ (deterministic), the DRF of the process $X_{\Phi}(\cdot)$ coincides with the DRF of the process 
\begin{equation} \label{eq:AM_def}
X_{\varphi}(t) = \sqrt{2} U(t) \cos\left(2\pi f_0 t+\varphi\right),\quad t\in \mathbb R.
\end{equation}
\end{prop}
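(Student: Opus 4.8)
The plan is to read off the answer from the ergodic decomposition of $X_\Phi(\cdot)$ and exploit the symmetry of the model. As noted in the Introduction, $X_\Phi(\cdot)$ is strictly stationary but non-ergodic: since $\Phi$ is uniform and independent of $U(\cdot)$, a time shift merely reparametrizes the phase, so the law is shift-invariant, while the phase survives as an invariant of the sample path. Hence $X_\Phi(\cdot)$ is an AMS process whose ergodic components are precisely the fixed-phase processes $X_\varphi(\cdot)$, $\varphi\in[0,2\pi)$, each of which is a Gaussian cyclostationary and therefore block-ergodic process \cite{Berger1968254}. This places us in the setting of the ergodic decomposition of the DRF of an AMS source, \cite[Thm. 11.3.1]{gray1990entropy}.

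The key reduction is that all of these ergodic components share one and the same DRF. To establish this I would fix phases $\varphi,\varphi'$, set $\tau=(\varphi'-\varphi)/(2\pi f_0)$, and observe that
\[
X_\varphi(t+\tau) = \sqrt{2}\, U(t+\tau)\cos\left(2\pi f_0 t + \varphi'\right).
\]
Stationarity of $U(\cdot)$ gives $\{U(t+\tau)\}_t \stackrel{d}{=} \{U(t)\}_t$, so $X_\varphi(\cdot+\tau)$ and $X_{\varphi'}(\cdot)$ have identical laws. Because the DRF is defined through the limit $T\to\infty$ in which a fixed time shift is asymptotically negligible in both the per-unit-time distortion \eqref{eq:dist_cont} and the information rate $I_T$, I conclude $D_{X_\varphi}(R)=D_{X_{\varphi'}}(R)$ for every pair of phases; write this common value as $D_{X_\varphi}(R)$.

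With the components equalized, both bounds follow almost automatically. For the converse I would invoke \cite[Thm. 11.3.1]{gray1990entropy}: the DRF of $X_\Phi(\cdot)$ is obtained from the DRFs of its ergodic components, and since these are all equal to $D_{X_\varphi}(R)$, whatever combination rule the decomposition prescribes (an average over $\varphi$, or an essential supremum) can only return $D_{X_\varphi}(R)$. For achievability I would run the asynchronous block code from the proof of \cite[Thm. 11.6.1]{gray1990entropy}: on each long block of length $N$ the encoder prepends a short synchronization prefix conveying a fine quantization of the realized phase, and then applies the optimal rate-$R$ code designed for the matching fixed-phase component $X_\varphi(\cdot)$. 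Since the prefix has length $o(N)$, its rate contribution vanishes as $N\to\infty$, so the scheme attains $D_{X_\varphi}(R)$ on $X_\Phi(\cdot)$. Combining the two directions yields $D_{X_\Phi}(R)=D_{X_\varphi}(R)$.

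The step I expect to demand the most care is the achievability half, namely making rigorous that a continuum-valued unknown phase can be described with asymptotically negligible rate while still guaranteeing near-optimal distortion for almost every $\varphi$ simultaneously; this is exactly the technical content of \cite[Thm. 11.6.1]{gray1990entropy}, whose hypotheses (an AMS source, a suitable single-letter fidelity criterion, and a standard reference measure) I would need to verify in the present Gaussian quadratic-distortion setting. The symmetry reduction of the second paragraph is what lets me quote that theorem essentially verbatim, since it removes any nontrivial $\varphi$-dependence from the component DRFs.
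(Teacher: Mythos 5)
Your proposal is correct and follows essentially the same route as the paper, which likewise disposes of the random phase by citing the asynchronous block-coding argument of \cite[Thm. 11.6.1]{gray1990entropy} together with the ergodic decomposition into the fixed-phase components $X_\varphi(\cdot)$. The one piece you make explicit that the paper leaves implicit is the time-shift argument showing all components $X_\varphi(\cdot)$ share a common DRF, which is a worthwhile addition since the proposition's ``for any $\varphi$'' silently relies on it.
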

It was noted in Example~\ref{ex:AM} above that $X_{\varphi}(\cdot)$ is CS with the SCD function \eqref{eq:AM_SCD}. It follows that $D_{X_\Phi}(R)$ is given by the DRF of the Gaussian CS process $X_\varphi(\cdot)$, generated by modulating the stationary Gaussian process $U(\cdot)$ using a deterministic cosine wave. Note that regardless of the carrier frequency $f_0$, the baseband process $U(\cdot)$ can always be recovered from $X_{\varphi}(\cdot)$, and that the $\sqrt{2}$ factor implies that the modulation preserves energy. These two facts are not enough to guarantee equality between the DRFs of the processes, since the modulation may lead to a `change in coordinates' in the spectrum, in analogy with \eqref{eq:change_of_coordinate} and \cite[Ch. 22]{Shannon1948}. In the following proposition we use Theorem~\ref{thm:main_result} to show that this equality indeed holds as long as $f_0$ is bigger than twice the bandwidth of $S_U(f)$. 
\begin{prop} \label{prop:DRF_AM}
Let $U(\cdot)$ be a Gaussian stationary process bandlimited to $\left(-f_B,f_B\right)$. Let $f_0>2f_B$. The DRF of the process 
\[
X_\varphi(t) = \sqrt{2} U(t) \cos\left(2\pi f_0 t+\varphi\right),\quad t\in \mathbb R,
\]
equals the DRF of the stationary Gaussian process $U(\cdot)$. 
\end{prop}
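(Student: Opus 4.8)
The plan is to evaluate the DRF of $X_\varphi(\cdot)$ directly through Theorem~\ref{thm:main_result}, exploiting the fact (Example~\ref{ex:AM}) that $X_\varphi(\cdot)$ is cyclostationary with period $T_0=f_0^{-1}$ and CPSD supported only on the orders $n\in\{0,\pm2\}$. Since $U(\cdot)$ is bandlimited its correlation is smooth, so the Lipschitz hypothesis of Theorem~\ref{thm:main_result} holds automatically. I would substitute this CPSD into the matrix entries \eqref{eq:main_result_entries}, form the $M\times M$ matrix $\mathbf S_X(e^{2\pi i\phi})$ of the $M$-fold discrete approximation, diagonalize it, and feed its eigenvalues into the waterfilling formulas \eqref{eq:main_D}--\eqref{eq:main_R}. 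The goal is to show that, after a change of variables, these collapse to the reverse-waterfilling \eqref{eq:SKP_cont} for $S_U$, so that $D_{X_\varphi}(R)=D_U(R)$.

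The heart of the argument is the eigenstructure of $\mathbf S_X(e^{2\pi i\phi})$. The aliasing sum over $k$ in \eqref{eq:main_result_entries} runs at the base rate $1/T_0=f_0$, and I would use the hypothesis $f_0>2f_B$ (equivalently $f_B/f_0<1/2$) to argue that each of the two sidebands of $X_\varphi(\cdot)$, located around $\pm f_0$, aliases to baseband without self-overlap: only the indices $k=\pm1$ survive, and only for $\phi\in(-f_B/f_0,f_B/f_0)$, where the $n=0$ and the $n=\pm2$ contributions all collapse to the single baseband value $s\triangleq S_U(\phi f_0)$. Carrying out the phase bookkeeping, I expect the matrix to factor as $\mathbf S_X(e^{2\pi i\phi})=\tfrac{f_0 s}{2}\,\mathbf A\,\mathbf C\,\mathbf A^{H}$, where $\mathbf A$ is the $M\times2$ matrix whose columns are the geometric vectors $a^m,b^m$ with $a=e^{2\pi i(\phi-1)/M}$, $b=e^{2\pi i(\phi+1)/M}$, and $\mathbf C=\left(\begin{smallmatrix}1 & e^{-2i\varphi}\\ e^{2i\varphi}&1\end{smallmatrix}\right)$. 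Since $\mathbf A^{H}\mathbf A=M\,\mathbf I_2$ for $M\geq3$ (the two columns are orthogonal because the geometric series of ratio $b/a=e^{4\pi i/M}$ sums to zero over $m=0,\dots,M-1$) and $\mathbf C$ has eigenvalues $0$ and $2$, the matrix has exactly one nonzero eigenvalue $Mf_0 S_U(\phi f_0)$ inside the band and vanishes outside it. Note that the eigenvalues of $\mathbf C$, and hence the DRF, are independent of $\varphi$, consistent with Proposition~\ref{prop:am_to_rp}.

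With this eigenvalue in hand I would substitute into \eqref{eq:main_D}--\eqref{eq:main_R}, change variables to $f=\phi f_0=\phi/T_0$, and reparametrize the water level as $\theta_M=Mf_0\,\theta$. The single-eigenvalue rate integral \eqref{eq:main_R} then becomes $\tfrac12\int_{-f_B}^{f_B}\log^+[S_U(f)/\theta]\,df$, and the distortion \eqref{eq:main_D}, once the identity $f_0T_0=1$ cancels the $1/M$ prefactor against the factor $M$ carried by both the eigenvalue and $\theta_M$, becomes $\int_{-f_B}^{f_B}\min\{S_U(f),\theta\}\,df$. These are precisely \eqref{eq:SKP_R_cont} and \eqref{eq:SKP_D_cont} for $S_U$, for every $M\geq3$, so the limit in \eqref{eq:main_thm_DR} is constant in $M$ and equals $D_U(R)$.

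The main obstacle is the normalization bookkeeping in this last step: Theorem~\ref{thm:main_result} carries the mismatched prefactors $1/M$ and $1/(2T_0)$ in its distortion and rate formulas, while the surviving eigenvalue itself grows linearly in $M$, so one must verify that the single reparametrization $\theta_M=Mf_0\theta$ threads both formulas consistently and yields an $M$-independent answer. The other delicate point is justifying that $f_0>2f_B$, rather than merely $f_0>f_B$, is the exact condition needed: it is precisely what prevents a sideband from self-aliasing under the rate-$f_0$ polyphase sampling and what forces the off-diagonal Gram entry to vanish, making the matrix exactly rank one. Were $f_0$ to fall in $(f_B,2f_B)$, the baseband images would overlap, the matrix would acquire a second nonzero eigenvalue on the overlap region, and the equality with $D_U$ would fail.
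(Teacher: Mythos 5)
Your proposal is correct and follows essentially the same route as the paper's proof in Appendix~\ref{sec:proof_DRF_AM}: substitute the AM CPSD into the matrix entries \eqref{eq:main_result_entries}, use $f_0>2f_B$ to restrict the aliasing sum to $k=\pm1$, show $\mathbf S_X\left(e^{2\pi i \phi}\right)$ is rank one with single nonzero eigenvalue $Mf_0S_U(f_0\phi)$, and change variables to recover \eqref{eq:SKP_cont} for every $M$. The only cosmetic difference is that you establish rank-one-ness via the factorization $\tfrac{f_0 s}{2}\mathbf A \mathbf C \mathbf A^{H}$ with $\varphi$ kept explicit in $\mathbf C$ (whose eigenvalues $0$ and $2$ are phase-independent), whereas the paper writes the matrix directly as an outer product with cosine entries and reads the eigenvalue off the trace; both give the same eigenvalue and the same conclusion.
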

\begin{proof}
See Appendix~\ref{sec:proof_DRF_AM}. 
\end{proof}
Proposition~\ref{cor:AM_DRF} asserts that the process $X_{\varphi}(\cdot)$ with AM signal structure suffers the same minimal distortion as the baseband process $U(\cdot)$ upon the encoding of each of them at rate $R$, and provided the latter is narrowband. Figure~\ref{fig:example_AM} shows that the above equality does not necessarily hold when $U(\cdot)$ is not narrowband. 
Propositions \ref{prop:am_to_rp} and \ref{prop:DRF_AM} leads to the following conclusion:
\begin{cor} \label{cor:AM_DRF}
Let $U(\cdot)$ be a Gaussian stationary process bandlimited to $\left(-f_B,f_B\right)$. Assume that $\Phi$ is uniformly distributed over $(0,2\pi)$ and $f_0>2f_B$. The distortion rate function of the stationary process 
\[
X_{\Phi}(t) = \sqrt{2} U(t) \cos\left(2\pi f_0 t+\Phi\right),\quad t\in \mathbb R,
\]
equals the DRF of the baseband process $U(\cdot)$. 
\end{cor}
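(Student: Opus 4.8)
The plan is to obtain Corollary~\ref{cor:AM_DRF} by composing the two preceding propositions, since together they supply the complete chain of equalities. Concretely, I would first invoke Proposition~\ref{prop:am_to_rp}, which — via the asynchronous block-coding argument underlying Theorem~11.6.1 of \cite{gray1990entropy} — shows that the randomness of the phase $\Phi$ in \eqref{eq:intro_random_phase} costs nothing asymptotically: for any fixed deterministic $\varphi\in[0,2\pi)$ one has $D_{X_\Phi}(R)=D_{X_\varphi}(R)$, where $X_\varphi(\cdot)$ is the deterministic-phase process \eqref{eq:AM_def}. I would then invoke Proposition~\ref{prop:DRF_AM}, whose hypotheses ($U(\cdot)$ bandlimited to $(-f_B,f_B)$ and $f_0>2f_B$) are exactly those assumed in the corollary, to conclude $D_{X_\varphi}(R)=D_U(R)$. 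Chaining the two yields $D_{X_\Phi}(R)=D_U(R)$, which is the claim.

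Because the corollary reduces to this two-line composition, the substantive content lives entirely in Proposition~\ref{prop:DRF_AM}, and I would organize the exposition so that the reader sees why the narrowband condition is what makes the equality hold. Here I would apply Theorem~\ref{thm:main_result} to $X_\varphi(\cdot)$, which is cyclostationary with period $T_0=1/f_0$ and whose CPSD is recorded in Example~\ref{ex:AM}. The task is to evaluate the eigenvalues $\lambda_m\expphi$ of the matrix $\mathbf S_X\expphi$ from \eqref{eq:main_result_entries} and to show that the reverse-waterfilling \eqref{eq:main_result_cont} over these eigenvalues collapses, in the limit $M\to\infty$, to the stationary waterfilling \eqref{eq:SKP_cont} over $S_U$. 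The mechanism is the spectral ``change of coordinates'' already seen in \eqref{eq:change_of_coordinate}: modulation by $\sqrt2\cos(2\pi f_0 t+\varphi)$ shifts the baseband spectrum to $\pm f_0$ while the $\sqrt2$ factor preserves energy, so the family of eigenvalue values obtained across $\phi$ is, up to a measure-preserving relabelling of frequency, a copy of $S_U(f)$. Since both waterfilling functionals depend only on the distribution of the values of their integrand, the two DRFs coincide.

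The narrowband hypothesis $f_0>2f_B$ enters precisely at the step where one must check that this relabelling is measure-preserving: it guarantees that the spectral replica centered at $+f_0$ and the one centered at $-f_0$ (together with the aliases produced by the folding in \eqref{eq:main_result_entries}) have disjoint supports, so that no two distinct frequencies of $U(\cdot)$ are superimposed inside any single eigenvalue. If $f_0$ were small enough for the replicas to overlap, distinct spectral components would add inside an eigenvalue and the equality of the waterfilling integrals would break; this is the behaviour illustrated in Figure~\ref{fig:example_AM}.

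I expect the main obstacle to be the eigenvalue computation in Proposition~\ref{prop:DRF_AM}: the matrix $\mathbf S_X\expphi$ in \eqref{eq:main_result_entries} is neither Toeplitz nor block-Toeplitz, so Szeg\H o's theorem is unavailable (as noted in the discussion following Theorem~\ref{thm:main_result}), and one must diagonalize it directly and control the $M\to\infty$ limit by hand, carefully tracking how the nonzero cyclic orders $m=0$ and $m=\pm2$ from Example~\ref{ex:AM} distribute mass across the $\phi$-integral. By contrast, Proposition~\ref{prop:am_to_rp} is delicate only in that $X_\Phi(\cdot)$ is non-Gaussian and non-ergodic, a difficulty already resolved by the AMS and ergodic-decomposition machinery; I would therefore treat it as a black box rather than reprove it.
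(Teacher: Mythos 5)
Your proposal matches the paper's own argument: the corollary is obtained exactly by chaining Proposition~\ref{prop:am_to_rp} (the random phase costs nothing, via the asynchronous block-coding result) with Proposition~\ref{prop:DRF_AM} (deterministic-phase AM preserves the DRF when $f_0>2f_B$), and your account of why the narrowband condition matters is consistent with the paper's appendix proof, where the non-overlap of the shifted replicas makes $\mathbf S\expphi$ rank one with single eigenvalue $Mf_0 S_U(f_0\phi)$, collapsing the waterfilling to \eqref{eq:SKP_cont}. No gaps.
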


It is interesting to note that the DRF of a Gaussian process with the same PSD as the stationary process $X_{\Phi}(\cdot)$ is strictly bigger than the DRF of the baseband process $U(\cdot)$, and therefore provides an upper bound to $D_{X_{\Phi}}(R)$. This upper bound is illustrated in Fig.~\ref{fig:example_AM}. 


\begin{figure}
\begin{center}
\begin{tikzpicture}
\node at (0.215,-0.25) {\includegraphics[scale=0.4]{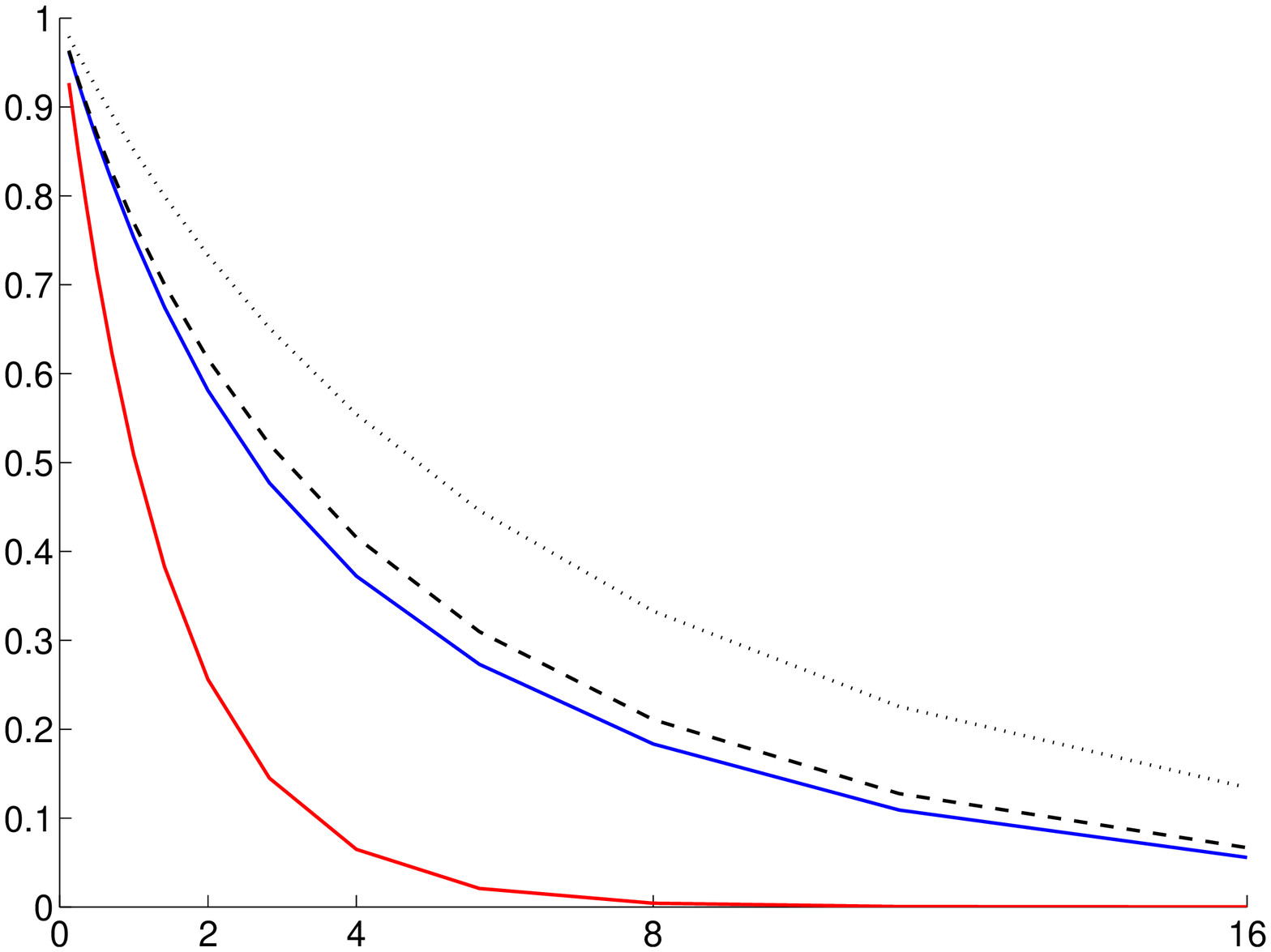}};
\node[rotate = -35] at (-1,-0.9) {\scriptsize $D_U(R)$};
\node[rotate = -35, color = blue] at (-1.4,-1.4) {\scriptsize $D_{X_\varphi}(R)$};
\draw[->] (-3.5,-3) -- (3.9,-3) node[right, fill = white] {$R$};
\draw[->] (-3.5,-3) -- node[above, rotate = 90, yshift = 0.4cm] {Distortion}(-3.5,3);
\draw (1,1) rectangle (3,3);
\draw[->] (2,1)  -- (2,2.5) node[above] {$S_U(f)$};
\draw[color = blue] (1,1)-- (1.5,1) node[below] {\scriptsize $-f_B$}  -- (2,2.3) -- (2.5,1) node[below] {\scriptsize $f_B$} -- (3,1); 

\draw (-1.5,1) rectangle (0.5,3);
\draw[->] (-0.5,1)  -- (-0.5,2.5) node[above] {$S_{\Phi}(f)$};

\draw[dotted] (-1.5,1) node[below] {\scriptsize $-f_0-f_B$} -- (-1.3,1)--(-0.7,1.7) -- (-0.1,1);
\draw[dotted]  (-0.9,1) --(-0.3,1.7) -- (0.3,1) node[below, xshift = 0cm] {\scriptsize $f_0+f_B$} ;
\draw (-1.5,1) -- (-1.3,1)--(-0.9,1.5)--(-0.7,2) -- (-0.3,2) -- (-0.1,1.5)--(0.3,1)--(0.5,1);

\end{tikzpicture}
\caption{\label{fig:example_AM} The DRF of the processes $X_{\varphi}(\cdot)$ of \eqref{eq:AM_def} (blue), the DRF of the baseband process $U(\cdot)$ (dashed), and the lower bound of Proposition~\ref{prop:lower_bound_cont}. The PSD $S_U(f)$ is taken to be the pulse given in the small frame. Proposition~\ref{prop:DRF_AM} implies that $D_U(R)$ and $D_{X_\varphi}(R)$ coincides for $f_s > 2f_B$. Also shown is the DRF of the Gaussian stationary process with PSD $S_\Phi(f)$ (dotted), which gives an upper bound to $D_{X_\varphi}(R)$. 
}
\end{center}
\end{figure}

\section{Conclusions} \label{sec:conclusions}
We derived an expression for the distortion-rate function (DRF) of a class of Gaussian processes with periodically time-varying statistics, known as cyclostationary (CS) processes. This DRF is computed by reverse waterfilling over eigenvalues of a spectral density matrix associated with the polyphase components in the decomposition of the source. Unlike other general expression for the DRF of Gaussian processes that use orthogonal basis expansion over increasing but finite time intervals, the expression we derive exploits the CS of the process by orthogonalizing the polyphase components. Since these components are defined over the entire time horizon, the resulting expression can be expressed in terms of the spectrum of the process. In the continuous-time counterpart the solution is given in terms of a limit over functions of these eigenvalues. \par
While we leave open the possibility whether there exists a closed form solution to the above limit in general, we have evaluated this limit in two special cases: a Gaussian CS processes with a PAM signal structure, and a Gaussian CS process with an amplitude modulation signal structure. As a result, we obtained the DRF of the processes obtained by these two important modulation schemes in terms of the power spectral density of the baseband stationary processes. We have also used the DRF result for a process with a PAM structure to derive the DRF of a process under combined sampling and source coding. 
\par
In addition to an expression for the DRF of CS processes, we have derived a lower bound on this DRF obtained by averaging the minimal distortion attained in encoding each of the polyphase components over a single period. This bound is tight when high correlation among these components is present. 

\appendices

\section{ \label{sec:main_result_discrete}}
In this Appendix we provide a proof of Theorem~\ref{thm:main_result_discrete}. 
Consider the vector valued process  ${\mathbf X}^M[\cdot]$ defined in \eqref{eq:associated_stationary}. The rate-distortion function of $\mathbf{X}^M[\cdot]$ is given by \eqref{eq:vector_RD}:
\begin{subequations}
\label{eq:main_result}
\begin{align}
D(\theta)& =\frac{1}{M}\sum_{m=1}^{M}\int_{-\infty}^\infty \min\left\{\lambda_m \left(e^{2\pi i\phi}\right) ,\theta \right\}d\phi, \\
R(\theta) & =\frac{1}{2} \sum_{m=1}^{M} \int_{-\infty}^\infty\log^+ \left[\lambda_m \left(e^{2\pi i\phi}\right)/\theta \right]d\phi,
\end{align} 
\end{subequations} 
where $0\leq \lambda_1\left(e^{2\pi i\phi}\right)\leq \ldots \leq \lambda_{M}\left(e^{2\pi i\phi}\right)$ are the eigenvalues of the spectral density matrix $\mathbf {S}_{\mathbf{{X}}^M}\left(e^{2\pi i \phi }\right)$ obtained by taking the Fourier transform of covariance matrix $\mathbf R_{\mathbf{ {X}}}[k]=\mathbb E\left[X^M[n+k] (X^M[n])^T \right]$ entry-wise. The $(m,r)^\textrm{th}$ entry of $\mathbf {S}_{\mathbf{{X}}^M}\left(e^{2\pi i \phi }\right)$ is given by \eqref{eq:PSD_discrete}:
\begin{align}
\left(\mathbf{S}_{\mathbf{{X}}^M}\left(e^{2\pi i \phi }\right) \right)_{m,r} & = S_{{X}}^{m,r}\left(e^{2\pi i \phi }\right) \nonumber \\
& = \frac{1}{M}\sum_{k=0}^{M-1} S_X^r \left(e^{2\pi i \frac{\phi-k}{M}} \right) e^{2\pi i (m-r) \frac{\phi-k}{M}}. \label{eq:main_discerete_proof_1} 
\end{align}

It is left to show that the DRF of $\mathbf X^M[\cdot]$ coincides with the DRF of $X[\cdot]$. By the source coding theorem for AMS processes \cite[Thm. 11.4.1]{gray1990entropy} it is enough to show that the operational block coding distortion-rate function (\cite[Ch. 11.2]{gray1990entropy}) of both processes is identical. Indeed, any $N$ block codebook for $\mathbf X^M[\cdot]$ is an $MN$ block codebook for $X[\cdot]$ which achieves the same quadratic distortion averaged over the block. However, since $\mathbf X^M[\cdot]$ is stationary, by \cite[Lemma. 11.2.3]{gray1990entropy} we know that any distortion above the DRF of $\mathbf X^M[\cdot]$ is attained for large enough $N$. This implies that the same is true for $X[\cdot]$. 

\section{ \label{sec:proof_main} }
In this Appendix we prove Theorem~\ref{thm:main_result}. 
Given a Gaussian cyclostationary process $X(\cdot)$ with period $T_0>0$, we define the discrete-time process $\bar{X}[\cdot]$ obtained by uniformly sampling $X(\cdot)$ at intervals $T_0/M$, i.e. 
\begin{equation} \label{eq:sampled_process_def}
\bar{X}[n]= X(nT_0/M),\quad n\in \mathbb Z.
\end{equation}
The autocorrelation function of $\bar{X}[\cdot]$ satisfies
\begin{align*}
 R_{\bar{X}}[n+M,k]& =\mathbb E\left[{\bar{X}}[n+M+k]{\bar{X}}[n+M] \right] \\
 & = \mathbb E\left[X(nT_0/M+T_0+k T_0/M)X(nT_0/M+T_0) \right] \\
 & = R_X(nT_0/M+T_0,kT_0/M+T_0) \\ 
 & = R_X(nT_0/M,kT_0/M) \\
 & = R_{\bar{X}}[n,k],
\end{align*}
which means that $\bar{X}[\cdot]$ is a discrete-time Gaussian cyclostationary process with period $M$.
The TPSD of $\bar{X}[\cdot]$ is given by
\[
S_{\bar{X}}^m(e^{2\pi i \phi})=\frac{M}{T_0}\sum_{k\in \mathbb Z} S_X^{mT_0/M}\left(\frac{M}{T_0}(\phi-k) \right).
\]
This means that the PSD of the $m^\textrm{th}$ PC of $\bar{X}[\cdot]$ is
\begin{align*}
S_{\bar{X}}^m\left(e^{2\pi i \phi} \right) & = \frac{1}{M} \sum_{n=0}^{M-1} S_{\bar{X}}^m\left(e^{2\pi i \frac{\phi-n}{M}} \right) \\
& = \frac{1}{T_0}  \sum_{n=0}^{M-1} \sum_{k\in \mathbb Z} S_X^{mT_0/M}\left(\frac{\phi-Mk-n}{T_0} \right) \\
& = \frac{1}{T_0} \sum_{l\in \mathbb Z} S_X^{mT_0/M} \left(\frac{\phi-l}{T_0} \right).
\end{align*}

By applying Theorem~\ref{thm:main_result_discrete} to $\bar{X}[\cdot]$, we obtain an expression for the DRF of $\bar{X}[\cdot]$ as a function of $M$:
\begin{subequations}
\label{eq:proof_main_cont}
\begin{align}
D_M(\theta_M)& =\frac{1}{M}\sum_{m=1}^{M}\int_{-\frac{1}{2}}^{\frac{1}{2}} \min\left\{\lambda_m\left(e^{2\pi i \phi }\right),\theta_M \right\}d\phi \label{eq:proof_main_cont_D} \\
\bar{R}(\theta_M)& =\frac{1}{2M} \sum_{m=1}^{M} \int_{-\frac{1}{2}}^{\frac{1}{2}}\log^+ \left[\lambda_m \left(e^{2\pi i \phi }\right)/ \theta_M \right]d\phi, \label{eq:proof_main_cont_R}
\end{align}
\end{subequations}
where $\lambda_1\left(e^{2\pi i \phi }\right) \leq \ldots \leq \lambda_{M}\left(e^{2\pi i \phi }\right)$ are the eigenvalues of the matrix with $(m+1,r+1)^\textrm{th}$ entry 
\begin{align}
S_{\bar{X}^m \bar{X}^r} \left(e^{2\pi i \phi}\right)& = \frac{1}{M}\sum_{n=0}^{M-1}S^r_{\bar{X}}\left(e^{2\pi i \frac{\phi-n}{M}} \right)e^{2\pi i (m-r) \frac{\phi-n}{M}} \label{eq:main_proof_final} \\
& = \frac{1}{T_0}\sum_{n=0}^{M-1} \sum_{k\in \mathbb Z } S_{X}^{rT_0/M} \left(\frac{\phi-n-kM}{T_0} \right) e^{2\pi i (m-r) \frac{\phi-n}{M}}, \nonumber \\
& = \frac{1}{T_0}\sum_{l\in \mathbb Z } S_{X}^{rT_0/M} \left(\frac{\phi-l}{T_0} \right) e^{2\pi i (m-r) \frac{\phi-l}{M}}. \nonumber
\end{align}
In order to express the code-rate in bits per time unit, we multiply the number of bits per sample $\bar{R}$ by the sampling rate $M/T_0$. This shows that the DRF of $\bar{X}[\cdot]$, as measured in bits pertime unit $R$, is given by \eqref{eq:main_result_cont}. \\

In order to complete the proof we rely on the following lemma:
\begin{lem} \label{lem:main_proof}
Let $X(\cdot)$ be as in Theorem~\ref{thm:main_result} and let $\bar{X}[\cdot]$ be its uniformly sampled version at rate $M/T_0$ as in \eqref{eq:sampled_process_def}. Denote the DRF at rate $R$ bits per time unit of the two processes by $D(R)$ and $\bar{D}(R)$, respectively. Then
\[
\lim_{M\rightarrow \infty} \bar{D}(R) = D(R). 
\]
\end{lem}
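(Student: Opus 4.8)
The plan is to realize the covariance matrix of the samples as a quadrature discretization of the Fredholm operator that defines the continuous-time DRF, and then to transfer convergence of the distortion-rate functions from convergence of the associated eigenvalues. Both $D(R)$ and $\bar D(R)$ are obtained as limits over the observation window $T$ of finite-horizon distortion-rate functions $D_T$ and $\bar D_{T,M}$, each given by the parametric waterfilling formula \eqref{eq:DRF_KL} over a set of eigenvalues: for $D_T$ these are the eigenvalues $\{\lambda_k^{(T)}\}$ of the integral operator $(\mathcal K_T f)(t)=\frac{1}{2T}\int_{-T}^T K_X(t,s)f(s)\,ds$ from \eqref{eq:fredholm}, while for $\bar D_{T,M}$ they are the (per-time normalized) eigenvalues of the covariance matrix of the samples $X(jT_0/M)$ falling in $[-T,T]$. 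Writing $D(R)=\lim_{T}D_T(R)$ and $\bar D(R)=\lim_M\lim_T\bar D_{T,M}(R)$, the lemma reduces to the interchange $\lim_M\lim_T\bar D_{T,M}=\lim_T\lim_M\bar D_{T,M}=\lim_T D_T$, where the last equality is a fixed-$T$ statement.

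First I would fix $T$ and prove $\bar D_{T,M}(R)\to D_T(R)$ as $M\to\infty$. With nodes $t_j=jT_0/M$ and uniform weight $T_0/M$, the covariance matrix of the samples lying in $[-T,T]$ is, once the same per-time normalization built into \eqref{eq:fredholm} is applied, precisely the rectangular-rule (Nystr\"om) discretization of $\mathcal K_T$. Since $R_X(t,\tau)$ is bounded and Lipschitz in $\tau$ (and periodic in $t$), the kernel $K_X(t,s)=R_X(s,t-s)$ is continuous on the compact square $[-T,T]^2$, so $\mathcal K_T$ is positive, self-adjoint and trace-class. Standard Nystr\"om-approximation theory then yields, for each fixed $k$, convergence of the $k$-th largest discretized eigenvalue to $\lambda_k^{(T)}$, while tail contributions are controlled by the convergence of traces $\frac{T_0}{M}\sum_j K_X(t_j,t_j)\to\int_{-T}^T R_X(t,0)\,dt$, a Riemann sum of the continuous map $t\mapsto R_X(t,0)$. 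Because the waterfilling maps $\theta\mapsto\sum_k\min(\theta,\lambda_k)$ and $\theta\mapsto\frac12\sum_k\log^+(\lambda_k/\theta)$ are continuous in the eigenvalue sequence under this joint pointwise-plus-trace convergence — only finitely many eigenvalues exceed any fixed water level, and the residual mass is dominated by the converging trace — the finite-horizon distortion-rate functions converge, giving $\bar D_{T,M}(R)\to D_T(R)$.

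The main obstacle is the interchange of the limits in $T$ and $M$, and I would resolve it by upgrading the fixed-$T$ convergence to convergence uniform in $T$. The key point is that cyclostationarity supplies a global modulus of continuity: periodicity of $R_X(t,\tau)$ in $t$ together with the Lipschitz bound in $\tau$ produces a single Lipschitz constant for $K_X$ valid on the whole plane, independent of $T$. Consequently the per-entry rectangular-rule error is $O(T_0/M)$ with a constant independent of $T$, and after normalization the induced perturbation of the eigenvalue sequence, measured through the trace (equivalently, the total power $\sigma_X^2=\frac{1}{T_0}\int_0^{T_0}R_X(t,0)\,dt<\infty$), tends to $0$ as $M\to\infty$ uniformly in $T$. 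Feeding this uniform eigenvalue perturbation through the Lipschitz dependence of the waterfilling functionals gives $\sup_{T}|\bar D_{T,M}(R)-D_T(R)|\to0$, which legitimizes swapping the order of limits and completes the proof via $\lim_M\bar D(R)=\lim_M\lim_T\bar D_{T,M}(R)=\lim_T\lim_M\bar D_{T,M}(R)=\lim_T D_T(R)=D(R)$. I expect the delicate step to be making this uniform-in-$T$ eigenvalue perturbation rigorous, since it requires controlling the quadrature error simultaneously at all window sizes and ruling out an accumulation of error in the low-eigenvalue tail; the finiteness of $\sigma_X^2$ and the global Lipschitz modulus are precisely the hypotheses that make this control possible.
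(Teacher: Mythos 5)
Your proposal follows essentially the same route as the paper: approximate the continuous-time kernel by a discretized one, transfer eigenvalue convergence to convergence of the waterfilling functionals, and justify the interchange of the $T$ and $M$ limits by upgrading the eigenvalue perturbation to one uniform in $T$ via the global Lipschitz constant supplied by cyclostationarity. The paper realizes the step you flag as delicate by replacing $K$ with the piecewise-constant kernel $\widetilde{K}(t,s)=K\left(\lfloor tM/T_0\rfloor T_0/M,\lfloor sM/T_0\rfloor T_0/M\right)$ --- whose eigenvalues coincide \emph{exactly} with those of the sample covariance matrix, so no quadrature-error analysis is needed --- and then applying Wielandt's perturbation inequality $|\lambda_l-\tilde{\lambda}_l|\leq 2\sup_{t,s}|K(t,s)-\widetilde{K}(t,s)|\leq 4CT_0/M$, which is uniform in both $l$ and $T$ and, combined with trace convergence for the tails, yields precisely the uniform-in-$T$ control you anticipate needing.
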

The rest of the appendix is devoted to the proof of Lemma~\ref{lem:main_proof}. \\

Throughout the next steps it is convenient to use the covariance kernels $K(t,s) = R_X(s,t-s)$ and $\bar{K}[n,k] = R_{\bar{X}}[n,k-n]$. For $M \in \mathbb N$, define 
\[
\widetilde{K}(t,s) = K\left(\lfloor tM/T_0 \rfloor T_0/M , \lfloor s M/T_0 \rfloor T_0/M \right). 
\]
For any fixed $T>0$, the kernel $\widetilde{K}(t,s)$ defines an Hermitian positive compact operator \cite{konig2013eigenvalue} on the space of square integrable functions over $[-T,T]$. The eigenvalues of this operator are given by the Fredholm integral equation
\begin{equation} \label{eq:main_proof_fredholm}
\tilde{\lambda}_l \tilde{f}_l(t) = \frac{1}{2T} \int_{-T}^T \widetilde{K}(t,s) \tilde{f}_l(s) ds,\quad -T \leq t \leq T,
\end{equation}
where it can be shown that there are at most $MT/T_0$ non-zero eigenvalues $\{ \tilde{\lambda}_l \}$ that satisfy \eqref{eq:main_proof_fredholm}. We define the function $\tilde{D}_T(R)$ by the following parametric expression:
\begin{align}
\begin{split}
\tilde{D}_T(\theta) & = \sum_{l=1}^\infty \min \left\{\tilde{\lambda}_l,\theta \right\} \\
R(\theta) & = \frac{1}{2} \sum_{l=1}^\infty \log^+ \left(  \frac{\tilde{\lambda}_l}{\theta} \right)
\label{eq:main_proof_DRF_approx}
\end{split} 
\end{align}
(the eigenvalues in \eqref{eq:main_proof_DRF_approx} are implicitly depend on $T$). 
Note that 
\begin{equation}
\sum_{l=1}^\infty \tilde{\lambda}_l = \frac{1}{2T} \int_{-T}^T \widetilde{K}(t,t) dt = \frac{1}{2T} \sum_{n=-N}^N K(nT_0/M,nT_0/M), \label{eq:proof_eigs_sum}
\end{equation}
where $N = MT/T_0$. Expression \eqref{eq:proof_eigs_sum} converges to 
\[
\frac{1}{2T}\int_{-T}^T K(t,t) dt \leq \sigma_X^2
\]
as $M$ goes to infinity due to our assumption that $R(t,\tau)$ is Riemann integrable and therefore so is $K(t,s)$. Since we are interested in the asymptotic of large $M$, we can assume that \eqref{eq:proof_eigs_sum} is bounded. This implies that $\tilde{D}_T(R)$ is bounded. 

We would like to claim that the eigenvalues $ \{ \tilde{\lambda}_l \}$ approximate the eigenvalues $\{ \lambda_l \}$. We have the following lemma:
\begin{lem} \label{lem:main_proof_1}
Let $\{ \lambda_l \}$ and $\{ \tilde{\lambda}_l \}$ be the eigenvalues in the Fredholm integral equation of $K(t,s)$ and $\widetilde{K}(t,s)$, respectively. Assume that these eigenvalues are numbered in a descending order. Then
\begin{align}
 \label{eq:main_proof_Lipschitz}
\left| \lambda_l - \tilde{\lambda}_l \right|  \leq 4 C T_0/M, \quad l=1,2,\ldots.
\end{align}
\end{lem}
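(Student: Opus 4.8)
The plan is to compare the two integral operators directly via a standard eigenvalue perturbation inequality, after first establishing a uniform pointwise bound on the difference of their kernels. Write $\mathcal K$ and $\widetilde{\mathcal K}$ for the operators on $L^2[-T,T]$ with kernels $K(t,s)$ and $\widetilde K(t,s)$ and the $\frac{1}{2T}$ normalization of \eqref{eq:main_proof_fredholm}. Both kernels are bounded, symmetric, and Riemann integrable, so $\mathcal K$ and $\widetilde{\mathcal K}$ are compact self-adjoint (indeed Hilbert--Schmidt); in particular $\widetilde{\mathcal K}$ has finite rank, with at most $MT/T_0$ nonzero eigenvalues, and I would pad both spectra with zeros so that $\{\lambda_l\}$ and $\{\tilde\lambda_l\}$ are indexed in a common descending order.

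The first step is the pointwise kernel bound. Writing $t_M = \lfloor tM/T_0\rfloor T_0/M$ and $s_M = \lfloor sM/T_0\rfloor T_0/M$, so that $0 \le t-t_M, s-s_M < T_0/M$, I would exploit the two equivalent representations $K(t,s) = R_X(s,t-s) = R_X(t,s-t)$ of the symmetric kernel. Rounding $t$ to $t_M$ with $s$ held fixed changes only the second argument of the first representation, and then rounding $s$ to $s_M$ with the first argument held at $t_M$ changes only the second argument of the second representation. Two applications of the Lipschitz hypothesis together with the triangle inequality yield
\begin{equation*}
\sup_{t,s}\left| K(t,s) - \widetilde K(t,s) \right| \le 2 C T_0/M \le 4 C T_0/M.
\end{equation*}

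The second step converts this kernel bound into an eigenvalue bound. By the Courant--Fischer min-max characterization (Weyl's perturbation inequality for compact self-adjoint operators), the descending eigenvalues satisfy $\left| \lambda_l - \tilde\lambda_l \right| \le \left\| \mathcal K - \widetilde{\mathcal K} \right\|$ for every $l$. Since $\mathcal K - \widetilde{\mathcal K}$ is an integral operator whose symmetric kernel is uniformly bounded by $4CT_0/M$, Schur's test with the $\frac{1}{2T}$ normalization gives $\left\| \mathcal K - \widetilde{\mathcal K} \right\| \le \frac{1}{2T}\sup_t \int_{-T}^T \left| K(t,s) - \widetilde K(t,s) \right| ds \le 4CT_0/M$. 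Combining the two steps establishes \eqref{eq:main_proof_Lipschitz}.

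The main obstacle is that the Lipschitz hypothesis controls only the second argument of $R_X$, whereas the kernel rounding perturbs \emph{both} coordinates $t$ and $s$; the resolution is precisely the use of the two representations of the symmetric kernel, which lets each rounding be absorbed into a single second-argument increment. The remaining work is routine: confirming compactness and self-adjointness so that the min-max comparison applies with consistent indexing, and observing that the $T$-dependence of the eigenvalues cancels, so that the resulting bound is uniform in $T$ and decays like $1/M$.
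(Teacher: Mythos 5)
Your proposal is correct and follows essentially the same route as the paper: a uniform bound on $\sup_{t,s}|K(t,s)-\widetilde K(t,s)|$ obtained from the Lipschitz hypothesis via the two symmetric representations of the kernel, followed by a Weyl-type eigenvalue perturbation inequality. The only difference is cosmetic --- the paper invokes a cited corollary of Wielandt giving $|\lambda_l-\tilde\lambda_l|\leq 2\sup_{t,s}|K-\widetilde K|$ directly, whereas you pass through the operator norm via Schur's test, which in fact yields the slightly sharper constant $2CT_0/M$.
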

\subsubsection*{Proof of Lemma~\ref{lem:main_proof_1}}
Approximations of the kind \eqref{eq:main_proof_Lipschitz} can be obtained by Weyl's inequalities for singular values of operators defined by self-adjoint kernels \cite{weyl1912asymptotische}. In our case it suffices to use the following result \cite[Cor. 1'']{wielandt1956error}: 
\begin{equation} \label{eq:main_proof_Weyl}
\left| \lambda_l - \tilde{\lambda}_l \right| \leq 2 \sup_{t,s \in [-T,T]} \left| K(t,s) - \widetilde{K}(t,s) \right|, \quad l=1,2,\ldots.
\end{equation}
The assumption that $R_X(t,\tau)$ is Lipschitz continuous in $\tau$ implies that there exists a constant $C>0$ such that for any $t_1,t_2, s \in \mathbb R$, 
\[
\left| K(t_1,s)-K(t_2,s) \right| = \left| R_X(s,t_1-s)-R_X(s,t_2-s) \right|  \leq C \left| t_1-t_2 \right|. 
\]
We therefore conclude that $K_X(t,s)$ is Lipschitz continuous in both of its arguments from symmetry. Lipschitz continuity of $K(t,s)$ implies 
\begin{align*}
& \left| K(t_1,s_1) - K(t_2,s_2) \right| \\
& \leq \left| K(t_1,s_1) - K(t_1,s_2) \right| + \left| K(t_1,s_2) - \widetilde{K}(t_2,s_2) \right|  \\
 & \leq C \left| s_1 - s_2 \right| + C \left| t_1 - t_2 \right|. 
\end{align*}
As a result, \eqref{eq:main_proof_Weyl} leads to
\begin{align*}
\left| \lambda_l - \tilde{\lambda}_l \right| & \leq 2 \sup_{t,s} \left| K(t,s) - \widetilde{K}(t,s) \right| \\
& = 2 \sup_{t,s \in [-T,T]} \left| K(t,s) - K\left(\lfloor tM/T_0 \rfloor T_0/M , \lfloor s M/T_0 \rfloor T_0/M \right) \right| \\
& \leq 2C \left( \left| t -  \lfloor tM/T_0 \rfloor T_0/M \right| + \left| t -  \lfloor s M/T_0 \rfloor T_0/M \right| \right) \\
& \leq 4 C T_0/M,
\end{align*} 
which proves Lemma~\ref{lem:main_proof_1}. \\

The significance of Lemma~\ref{lem:main_proof_1} is that the eigenvalues of the kernel $K(t,s)$ used in the expression for the DRF of $X(\cdot)$ can be approximated by the eigenvalues of $\widetilde{K}(t,s)$, where the error in each of these approximations converge, uniformly in $T$, to zero as $M$ increases. Since only a finite number of eigenvalues participate in \eqref{eq:DRF_KL} and since  both $D_T(R)$ and $\tilde{D}_T(R)$ are bounded continuous functions of their eigenvalues, we conclude that $\tilde{D}_T(R)$ converges to $D_T(R)$ uniformly in $T$. \\
Now let $\epsilon > 0$ and fix $M_0$ large enough such that for all $M>M_0$ and for all $T$ 
\begin{equation}
\label{eq:main_proof_part1}
\left| D_T(R) - \tilde{D}_T(R) \right| \leq \epsilon. 
\end{equation}

Recall that in addition to \eqref{eq:SKP}, the DRF of $\bar{X}[\cdot]$, denoted here as $\bar{D}(\bar{R})$, can also be obtained as the limit in $N$ of the expression
\begin{align*}
\bar{D}_N(\theta) & = \sum_{l=1}^\infty \min \left\{ \bar{\lambda}_l,\theta \right\} \\
\bar{R}(\theta) & = \frac{1}{2} \sum_{l=1}^\infty \log^+ \left(\bar{\lambda}_l/\theta \right),
\end{align*}
where $\bar{\lambda}_1,\bar{\lambda}_2,\ldots$ are the eigenvalues in the KL expansion of $\bar{X}$ over $n=-N,\ldots,N$:
\begin{align} \label{eq:main_proof_discrte_KL}
\bar{\lambda}_l f_l[n] = \frac{1}{2N+1} \sum_{k=-N}^{N} K_{\bar{X}}[n,k] f_l[k],\quad l=1,\ldots, N,
\end{align}
(there are actually at most $2N+1$ distinct non-zero eigenvalues that satisfies \eqref{eq:main_proof_discrte_KL}). Letting $T_N = T_0 M / N$ and $\tilde{f}_l(t) = f_l\left( \lfloor t/T_0 \rfloor M \right)$ \eqref{eq:main_proof_discrte_KL} can also be written as
\begin{align*}
\bar{\lambda}_l f_l[n] & = \int_{-T_N}^{T_N} \widetilde{K}_{X}(nT_0/M,s)  f_l\left[ \lfloor s/T_0  \rfloor M \right] ds,\quad l=1,2,\ldots, \\
\bar{\lambda}_l \tilde{f}_l(t) & = \int_{-T_n}^{T_N} \widetilde{K}(t,s)  \tilde{f}_l(s) ds,\quad -T_N< t < T_N. 
\end{align*}

From the uniqueness of the KL expansion, we obtain that for any $N$, the eigenvalues of $\widetilde{K}(t,s)$ over $T_N = T_0M/N$ are given by the eigenvalues of $\bar{K}[n,k]$ over $-N,\ldots,N$. We conclude that 
\begin{equation}
\label{eq:main_proof_part2} 
\bar{D}_N(\bar{R}) = \tilde{D}_{T_N}(R),
\end{equation} 
where $R = \bar{R} T_0 / M$. Now take $N$ large enough such that 
\[
\left| \bar{D}_N(R) - \bar{D}(R) \right| < \epsilon,
\]
and 
\[
\left| D_{T_N}(R) - D(R) \right| < \epsilon.
\]
For all $M\geq M_0$ we have
\begin{align}
 \left| D(R) - \bar{D}(R) \right| & =  \left| D(R) - D_{T_N}(R) + D_{T_N}(R) + \tilde{D}_{T_N}(R) \right. \nonumber \\
 & ~~ - \tilde{D}_{T_N}(R)   \left.+ \bar{D}_N(R) - \bar{D}_N(R) - \bar{D}(R) \right| \nonumber \\ 
& \leq  \left| D(R) - D_{T_N}(R) \right|  \label{eq:main_proof_l1} \\
& + \left| D_{T_N}(R) -  \tilde{D}_{T_N}(R)\right|  \label{eq:main_proof_l2} \\
& + \left| \tilde{D}_{T_N}(R)  - \bar{D}_N(R) \right|  \label{eq:main_proof_l3} \\
& + \left| \bar{D}_N(R) - \bar{D}(R) \right|  \leq 3\epsilon,  \label{eq:main_proof_l4} 
\end{align}
where the last transition is because: \eqref{eq:main_proof_l1} and \eqref{eq:main_proof_l4} are smaller than $\epsilon$ by the choice of $N$, 
\eqref{eq:main_proof_l2} is smaller than $\epsilon$ from \eqref{eq:main_proof_part1}. 
and \eqref{eq:main_proof_l3} equals zero from \eqref{eq:main_proof_part2}.

\section{ \label{sec:proof_bound_cont} }
In this Appendix we provide a proof of Proposition~\ref{prop:lower_bound_cont}. 
We use the process $\bar{X}[\cdot]$ defined in the proof of Theorem~\ref{thm:main_result} as the uniform sampled version of $X(\cdot)$ at rate $T_0/M$. From Proposition~\ref{prop:lower_bound} we conclude that the DRF of $\bar{X}[\cdot]$ satisfies
\begin{equation}
\label{eq:lower_bound_cont_proof}
D_{\bar{X}}(\bar{R}) \geq \frac{1}{M} \sum_{m=0}^{M-1} \int_{-\frac{1}{2}}^\frac{1}{2} \min \left\{  \frac{1}{T_0} \sum_{l\in \mathbb Z} S_X^{mT_0/M} \left(\frac{\phi-l}{T_0} \right),\theta_m \right\} d\phi,
\end{equation}
where for all $m=0,\ldots,M-1$, $\theta_m$ is determined by
\[
\bar{R} = \frac{1}{2} \int_{-\frac{1}{2}}^\frac{1}{2} \log^+ \left[ \frac{1}{T_0} \sum_{l\in \mathbb Z} S_X^{mT_0/M} \left(\frac{\phi-l}{T_0} \right)/\theta_m\right] d\phi.
\]
Denote $t=mT_0/M$. As $M$ approaches infinity, the RHS of \eqref{eq:lower_bound_cont_proof} converges to an integral with respect to $t$ over the interval $(0,T_0)$, which implies 
\begin{align} \label{eq:bound_proof_D}
\bar{D}(\bar{R}) &  \geq   \frac{1}{T_0} \int_0^{T_0} \int_{-\frac{1}{2}}^\frac{1}{2} \min \left\{  \sum_{l\in \mathbb Z} S_X^{~t} \left(\frac{\phi-l}{T_0} \right),\theta_t \right\} d\phi,
\end{align}
and 
\begin{align} \label{eq:bound_proof_R}
\bar{R} = \frac{1}{2} \int_{-\frac{1}{2}}^\frac{1}{2} \log^+ \left[ \sum_{l\in \mathbb Z} S_X^{mT_0/M} \left(\frac{\phi-l}{T_0} \right)/\theta_m\right] d\phi,
\end{align}
where we denoted $\theta_t = T_0 \theta_m$. In order to go from $\bar{R}$ to $R$ we multiply \eqref{eq:bound_proof_R} by $M/T_0$, so that \eqref{eq:bound_proof_D} and \eqref{eq:bound_proof_R} lead to \eqref{eq:lower_bound_cont}. The fact that the function $\bar{D}(R)$ converges to $D(R)$ as $M$ goes to infinity follows from the proof of Theorem.~\ref{thm:main_result}. 

\section{\label{app:pam_proof}}
In this Appendix we provide a proof of Proposition~\ref{prop:pam}. The entries of the matrix $\mathbf S \expphi$ in Theorem~\ref{thm:main_result} are obtained by using the CPSD of the PAM process \eqref{eq:PAM_CPSD} in \eqref{eq:main_result_entries}. For all $M \in \mathbb N$, this leads to
\begin{align}
& \mathbf S_{m+1,r+1} \expphi  = \nonumber \\ 
& \frac{1}{T_0^2} \sum_{k\in \mathbb Z} \left[ P\left(\frac{\phi-k}{T_0} \right) S_U \left( \frac{\phi-k}{T_0} \right) e^{2\pi i (\phi-k)\frac{m-r}{M} } \right. \nonumber \\
& \quad \quad \times \left. \sum_{n\in \mathbb Z} P^* \left( \frac{\phi-n-k}{T_0}\right) e^{2\pi i \frac{nr}{M}} \right] \label{eq:pam_proof} \\
&= \frac{1}{T_0^2} \sum_{k\in \mathbb Z}  P\left(\frac{\phi-k}{T_0} \right) S_U \left( \frac{\phi-k}{T_0} \right) e^{2\pi i (\phi-k)\frac{m}{M} }  \label{eq:pam_proof_a} \\
 & \quad  \quad \times \sum_{l\in \mathbb Z} P^* \left( \frac{\phi-l}{T_0}\right) e^{-2\pi i(\phi-l) \frac{r}{M}}. \nonumber 
\end{align}
The expression \eqref{eq:pam_proof_a} consist of the product of a term depending only on $m$ and a term depending only on $r$. We conclude that the matrix $\mathbf S \expphi$ can be written as the outer product of two $M$ dimensional vector, and thus it is of rank one. The single non-zero eigenvalue $\lambda_M \expphi$ of $\mathbf S \expphi$ is given by the trace of the matrix, which, by the orthogonality of the functions $e^{2\pi i \frac{nr}{M}}$ in \eqref{eq:pam_proof}, is evaluated as
\begin{align}
& \lambda_M \expphi = \frac{M}{T_0^2} \sum_{k\in \mathbb Z} \left |P\left(\frac{\phi-k}{T_0} \right) \right|^2 S_U \left( \frac{\phi-k}{T_0} \right). \label{eq:pam_proof_2}
\end{align}
We now use \eqref{eq:pam_proof_2} in \eqref{eq:main_result_cont}. In order to obtain \eqref{eq:DRF_PAM}, we change the integration variable from $\phi$ to $f=\phi/T_0$ and the water-level parameter $\theta$ to $T_0\theta/M$. Note that the final expression is independent of $M$, so the limit in \eqref{eq:main_thm_DR} is already given by this expression.

\section{ \label{sec:proof_DRF_AM}}
In this Appendix we provide a proof of Proposition~\ref{prop:DRF_AM}. Since $S_U(f)$ is compactly supported, the covariance function $R_U(\tau) = \mathbb E U(t+\tau) U(t)$ is an analytic function and therefore Lipshitz continuous. Lipschitz continuity of $R_U(\tau)$ implies Lipschitz continuity of $R_X(t,\tau)$ in its second argument and therefore Theorem~\ref{thm:main_result} applies: 
The DRF of the Gaussian CS process $X_\varphi(\cdot)$  with period $T_0=f_0^{-1}$ is obtained by using Theorem \ref{thm:main_result} with the SCD \eqref{eq:AM_SCD}. For all $M\in \mathbb N$ and $m,r =0,\ldots,M-1$ we have,
\begin{align*}
& \mathbf{S}_{m+1,r+1} \left(e^{2\pi i \phi} \right) = \\
&\frac{f_0}{2} \sum_{k\in \mathbb Z} S_U \left(f_0(\phi-k-1) \right) \left(1+e^{4\pi i r/M} \right)e^{2\pi i (\phi-k) \frac{m-r}{M}} \\
& ~~ + \frac{f_0}{2} \sum_{k\in \mathbb Z} S_U \left(f_0(\phi-k+1) \right) \left(1+e^{-4\pi i r/M} \right)e^{2\pi i (\phi-k) \frac{m-r}{M} }.
\end{align*}
Under the assumption that $f_0>2f_B$ we have that for all $\phi\in \left(-\frac{1}{2},\frac{1}{2}\right)$,  $S_U \left(f_0(\phi-k \pm 1) \right)=0$ for all $k\neq \pm1$. This leads to 
\begin{align}
 &\mathbf{S}_{m+1,r+1} \left(e^{2\pi i \phi} \right) \\ & =  S_U(f_0\phi)\frac{f_0\left(1+e^{4\pi i r/M} \right)}{2} e^{2\pi i(\phi+1)\frac{m-r}{M} } \nonumber \\
 &~~~~~~+ S_U(f_0\phi)\frac{f_0\left(1+e^{-4\pi i r/M} \right)}{2} e^{2\pi i(\phi-1)\frac{m-r}{M} } \nonumber \\
 & = 2f_0 S_U(f_0\phi) e^{2\pi i \frac{m-r}{M} \phi } \cos \left(2\pi \frac{m}{M} \right) \cos \left( 2\pi \frac{r}{M} \right).
 \label{eq:proof_am}
\end{align}

From \eqref{eq:proof_am} we conclude that the matrix $\mathbf S \left(e^{2\pi i \phi} \right)$ can be written as 
\[
\mathbf S \left(e^{2\pi i \phi} \right) = 2f_0 S_U(f_0 \phi) \mathbf S_M \left(e^{2\pi i \phi} \right) \mathbf S_M^* \left(e^{2\pi i \phi} \right),
\]
where $\mathbf S_M \left(e^{2\pi i \phi} \right) \in \mathbf R^{M \times 1}$ is given by
\[
\left(1, e^{2\pi i \phi/M}\cos\left(\frac{2\pi}{M} \right),\ldots,e^{2\pi i \phi \frac{M-1}{M}} \cos \left(\frac{2\pi(M-1)}{M} \right) \right).
\]
This means that $\mathbf S \left(e^{2\pi i \phi} \right)$ is a matrix of rank one, and its single non-zero eigenvalue is given by its trace:
\[
\lambda_{M} \left(e^{2\pi i \phi} \right) = 2f_0 S_U(f_0 \phi) \sum_{m=0}^{M-1} \cos^2\left(2\pi m/M\right) =M f_0 S_U(f_0 \phi).
\]
We use this in \eqref{eq:main_result_entries}:
\begin{align}
R_M(\theta) & = \frac{f_0}{2} \sum_{m=1}^{M} \int_{-\frac{1}{2}}^\frac{1}{2} \log^+ \left[\lambda_m\left(e^{2\pi \phi}\right) \right] d\phi \nonumber \\
& =  \frac{f_0}{2} \int_{-\frac{1}{2}}^\frac{1}{2} \log^+ \left[Mf_0 S_U(f_0 \phi)/\theta  \right] d\phi \nonumber \\
& =  \frac{1}{2}  \int_{-\infty}^\infty  \log^+ \left[ S_U(f)/(\theta/M)  \right] df \label{eq:proof_am_R},
\end{align}
and
\begin{align}
D_M(\theta) & = \frac{f_0}{M} \int_{-\frac{1}{2}}^\frac{1}{2} \min\left\{Mf_0 S_U(f_0 \phi),\theta \right\}d\phi \nonumber \\
& =f_0 \int_{-\infty}^{\infty} \min\left\{ S_U(f_0 \phi),\theta /M \right\}d\phi \nonumber \\
& = \int_{-\infty}^{\infty} \min\left\{ S_U(f),\theta /M \right\}df. \label{eq:proof_am_D}
\end{align}
From \eqref{eq:proof_am_R} and \eqref{eq:proof_am_D} we conclude that for every $M$, the parametric expression of $D$ as a function of $R$ is identical to the DRF of the stationary process $U(\cdot)$ given by \eqref{eq:SKP_cont}. 

\section*{ACKNOWLEDGMENT}
This work was supported in part by the NSF under grant CCF-1320628, under the NSF Center for Science of Information (CSoI) grant CCF-0939370, and under BSF Transformative Science Grant 2010505. The authors wish to thank Robert M. Gray for helpful remarks. We also wish to thank to anonymous reviewers and the AE for the depth and details of their reviews which have greatly improved this paper. 

\bibliographystyle{IEEEtran}
\bibliography{IEEEfull,/Users/Alon1/LaTex/bibtex/sampling}


\end{document}